\documentclass[submission,copyright,creativecommons]{eptcs}
\providecommand{\event}{QPL 2026} 

\usepackage{iftex}

\ifpdf
  \usepackage{underscore}         
  \usepackage[T1]{fontenc}        
\else
  \usepackage{breakurl}           
\fi

\usepackage{color}
\newcommand{\red}{}
\newcommand{\SY}{}

\usepackage{amsmath}
\usepackage{amsthm}
\usepackage{amsfonts}
\usepackage{comment}

\theoremstyle{plain}
\newtheorem{thm}{Theorem}
\newtheorem*{rthm1}{Theorem 1}
\newtheorem*{rthm2}{Theorem 2}
\newtheorem*{rthm3}{Theorem 3}
\newtheorem*{rthm4}{Theorem 4}
\newtheorem*{rthm5}{Theorem 5}
\newtheorem{lmm}[thm]{Lemma}

\newtheorem{dfn}[thm]{Definition}

\title{Multi-qubit controlled gate with optimal T-count}
\author{Soichiro Yamazaki
\institute{The University of Tokyo\\
Tokyo, Japan}
\email{soichiro.yamazaki@phys.s.u-tokyo.ac.jp}
\and
Seiseki Akibue
\institute{Communication Science Laboratories, NTT, Inc.,\\
NTT Research Center for Theoretical Quantum Information,\\
NTT Institute for Fundamental Mathematics,\\
3--1 Morinosato Wakamiya, Atsugi, Kanagawa, Japan}
\email{seiseki.akibue@ntt.com}
}
\def\titlerunning{Multi-qubit controlled gate with optimal T-count}
\def\authorrunning{S. Yamazaki \& S. Akibue}

\begin{document}

\maketitle

\begin{abstract}
Controlled gates are key components in various quantum algorithms.
Improving on the prior work of Gosset et al., we show that, for an allowed error $\varepsilon$, $3\log_2(1/\varepsilon) + o(\log(1/\varepsilon))$ $T$ gates are sufficient to approximate most multi-qubit controlled SU(2)s.
We also show that this T-count matches the lower bound when the use of an almost controlled gate is prohibited.
As an application, general controlled gate synthesis and efficient SU(4) gate synthesis are given.
\end{abstract}

\section{Introduction}
The Gottesman-Knill theorem reveals that circuits composed of Clifford gates can be efficiently simulated classically and do not possess quantum {\red computational} advantage.
Adding just one non-Clifford gate to Clifford gates achieves quantum universality, enabling any quantum gate to be approximated with arbitrary precision. A well-known non-Clifford gate is the $T$ gate, which can be implemented highly precisely even in surface codes, which are commonly used in quantum error correction, through a technique called magic state distillation. However, $T$ gates are costlier than Clifford gates, which can be implemented transversally \cite{PhysRevA.71.022316}.
In terms of diamond distance, which quantifies distinguishability as a quantum channel, approximating a quantum gate with high precision below an allowed error $\varepsilon$ is called $\varepsilon$-approximation. Here, an efficient $\varepsilon$-approximation of a quantum gate in Clifford+T refers to an approximation that achieves high precision below $\varepsilon$ with fewer $T$ gates, i.e., an approximation with a smaller T-count.

Numerous studies have attempted to approximate quantum gates by synthesizing Clifford+T gates and have revealed much about the lower bound of the T-count for single-qubit gates, and methods have also been proposed to achieve this lower bound. 
The lower bound for the $\varepsilon$-approximation of all single-qubit gates is $O(\log(1/\varepsilon))$ \cite{10.5555/3179330.3179331}. Specifically, it is $3\log_2(1/\varepsilon) + o\left(\log(1/\varepsilon)\right)$ {\red for all but an inverse-polynomially small fraction of target single-qubit gates.}
\cite{Morisaki:2025mwy}.
On the other hand, much remains unknown regarding multi-qubit gates. Although the lower bound for the T-count is $O(\log(1/\varepsilon))$ when the number of qubits $n$ is fixed, as it can be reduced to single-qubit gate synthesis on the basis of decomposition into single-qubit gates, the dependence on $n$ and the coefficient of the leading term have not yet been revealed.
Controlled gates appear in various quantum algorithms such as state preparation \cite{doi:10.1137/1.9781611978971.122}, quantum phase estimation \cite{Nielsen_Chuang_2010}, quantum signal processing \cite{PhysRevLett.118.010501}, and Hamiltonian simulation \cite{Low2019hamiltonian}, and for improving the performance of these diverse quantum algorithms, the T-count for their approximation needs to be reduced.
Similarly to the other multi-qubit gate, little is known about controlled gates. However, if the blocks in matrix representation are limited to SU(2), an $\varepsilon$-approximation can be achieved with an $O(\sqrt{2^n\log(1/\varepsilon)} +  \log(1/\varepsilon))$ T-count, and this achieves the lower bound \cite{doi:10.1137/1.9781611978971.122}.

In this study, {\red we derive the optimal constant in the $\log(1/\varepsilon)$ term for all but an inverse-polynomially small fraction of target gates}
, the same as that of a single-qubit gate synthesis.
We also show that this coefficient matches the lower bound as long as the form of the approximate circuit of Clifford+T preserves the form of the controlled gate.

\begin{thm} [$\mathrm{SU}(2)^{\oplus 2^n}$ controlled gate synthesis with optimal T-count] \label{thm:su2cnt}
    For $n\geq1$, $n$-qubit controlled gates whose diagonal blocks are composed of SU(2) can be implemented up to error $\varepsilon$ by a Clifford+T circuit using
    \begin{equation} \label{eq:thm1}
        3\log_2(1/\varepsilon) + O\left(\sqrt{2^n\log(1/\varepsilon)}\right) + o\left(\log(1/\varepsilon)\right) \quad \mathrm{w.h.p.}
    \end{equation}
    T gates and ancillae.
    Furthermore, when $n$ is fixed, no Clifford+T circuit with a diagonal block form can use asymptotically fewer T gates.
\end{thm}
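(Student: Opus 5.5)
The plan is to combine the Gosset et al.\ construction for $\mathrm{SU}(2)^{\oplus 2^n}$ controlled gates with the optimal single-qubit synthesis of Morisaki et al., so that the $\log(1/\varepsilon)$-dependent cost is paid essentially only once.

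\textbf{Step 1: separate the $\varepsilon$-dependence.} Write the target as $U=\bigoplus_{x\in\{0,1\}^n} U_x$ with $U_x\in\mathrm{SU}(2)$. I would fix a single "reference" unitary $V\in\mathrm{SU}(2)$, which can be chosen generically, for instance Haar random. Then I factor
\[
U = \Bigl(\bigoplus_x W_x\Bigr)\,(I\otimes V)\,\Bigl(\bigoplus_x W'_x\Bigr),
\]
or use a conjugation form $U_x = W_x V W_x^\dagger$ up to a phase correction. The aim is for each $W_x$ to need only coarse precision, or else to come from a finite Clifford+T-exact set of size $\mathrm{poly}$ in the relevant parameters.

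A cleaner route is to exploit that the Gosset et al.\ scheme already costs $O(\sqrt{2^n\log(1/\varepsilon)}+\log(1/\varepsilon))$. In that scheme the additive $\log(1/\varepsilon)$ part comes from implementing a single-qubit rotation (or a small number of them) through a lookup/select structure. I would replace that single-qubit step with the Morisaki et al.\ synthesis, which uses $3\log_2(1/\varepsilon)+o(\log(1/\varepsilon))$ T gates with high probability over the target. The select/QROM part costs $O(\sqrt{2^n\log(1/\varepsilon)})$ T gates via the standard ancilla-assisted $\sqrt{N b}$ tradeoff for loading $2^n$ words of $b=O(\log(1/\varepsilon))$ bits.

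\textbf{Step 2: the key step.} I must ensure that the loaded data plus a single fixed-form rotation reproduce every $U_x$ to error $\varepsilon$. I would do this through the Euler decomposition $U_x=R_z(\alpha_x)R_y(\beta_x)R_z(\gamma_x)$. The angles are loaded into an ancilla register with $b$ bits each, and each data-controlled rotation is realised by a phase-gradient adder. That adder uses $O(b)$ T gates but with a constant larger than 3, which is the obstacle.

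The fix I would try is to load only a Clifford+T word, that is a description of a grid-point approximation $\tilde U_x$ of $U_x$. The difference $U_x\tilde U_x^{-1}$ can be made to sit in a fixed rotation family, so that only one residual gate of precision $\varepsilon$ needs the optimal synthesis. The "w.h.p." qualifier would be inherited from the Morisaki et al.\ bound applied to this residual.

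\textbf{Step 3: the main obstacle.} I expect the main difficulty to be making the $\varepsilon$-accurate part shared across all $x$ without paying $\Theta(\log(1/\varepsilon))$ per branch or a constant above 3. Concretely, one has to show that a uniformly controlled data-dependent circuit whose total fine-precision cost is $3\log_2(1/\varepsilon)+o(\cdot)$ can realise arbitrary tuples $(U_x)_x$ up to $O(\sqrt{2^n\log(1/\varepsilon)})$ overhead. I would first try a mixing/probabilistic argument, either randomized compiling or mixing two approximations to get quadratic error suppression. The idea is that each branch then only needs precision $\sqrt{\varepsilon}$ on its coarse part, with the data loading handling the rest.

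\textbf{Step 4: lower bound.} Suppose the circuit has diagonal block form, and set $n$ fixed. Restricting to the block $x=0$ gives a Clifford+T-implementable single-qubit channel (with ancillae) that $\varepsilon$-approximates $U_0$ using at most the same T-count. The single-qubit lower bound $3\log_2(1/\varepsilon)-o(\log(1/\varepsilon))$ for all but a vanishing fraction of targets, from the counting argument of Morisaki et al., then transfers directly. Since the $O(\sqrt{2^n\log(1/\varepsilon)})$ term is $o(\log(1/\varepsilon))$ for fixed $n$, the upper and lower bounds match in the leading term.
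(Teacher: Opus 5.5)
Your upper bound is not actually established. Steps 1--3 end with the central obstacle unresolved, and none of the proposed fixes works.
\begin{itemize}
\item A conjugation form $U_x=W_xVW_x^\dagger$ with one fixed $V$ forces every $U_x$ to share the spectrum of $V$, and a phase cannot repair this inside $\mathrm{SU}(2)$. The general product form $U_x=W_xVW'_x$ merely moves the $x$-dependent, $\varepsilon$-precise information into $W_x$ or $W'_x$.
\item The residuals $U_x\tilde U_x^{-1}$ also depend on $x$. Even if you force them into one family such as $z$-rotations, their angles are $x$-dependent and must be realized to precision $\varepsilon$. That is exactly the data-controlled rotation problem you hit in Step 2.
\item For the same reason, ``replacing the single-qubit step'' of Gosset et al.\ by one Morisaki et al.\ synthesis has no meaning. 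The $2^n$ targets differ, so there is no single rotation to synthesize.
\item Randomized compiling or mixing yields a non-unitary channel rather than a Clifford+T circuit, so it lies outside the statement.
\item The w.h.p.\ guarantee must control the maximum of the $2^n$ (or $k$ distinct) single-qubit T-counts, not one residual synthesis.
\end{itemize}

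You come closest in Step 2 when you load a Clifford+T word for $\tilde U_x$. If $\tilde U_x$ is already an $\varepsilon$-approximation, there is nothing left to correct; the real question is the T-cost of \emph{executing} a loaded word. The missing idea is that all branches can run in parallel through one common sequence of layers, each costing a single T gate whichever branch is active. The paper's construction is as follows.
\begin{itemize}
\item Synthesize each $U_x$ separately with even T-count $m_x$ in Matsumoto--Amano normal form.
\item Using identities such as $(TH)^\dagger=HTSH\cdot HZ$, pad every word to the common length $m=\max_x m_x$, so that each layer is $\zeta TH$ or $TSH$. The price is one extra $H$ at an $x$-dependent position.
\item Load the per-branch selector bits into clean ancillae with the Low--Kliuchnikov--Schaeffer Boolean oracle, at cost $O(\sqrt{2^n\log(1/\varepsilon)})$.
\item Implement each layer as the bit-controlled gate $TSH\oplus\zeta TH=(I_2\otimes S)\mathrm{CNOT}(I_2\otimes T)\mathrm{CNOT}(I_2\otimes H)$, which uses one T gate.
\end{itemize}
Besides the oracle, this costs $m+(m-\min_x m_x)+O(1)$ T gates. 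A concentration lemma then gives $m=3\log_2(1/\varepsilon)+O(n)+o(\log(1/\varepsilon))$ and $m-\min_x m_x=O(n)+o(\log(1/\varepsilon))$ w.h.p.; it combines upper and lower tail bounds on single-qubit T-counts with a union bound.

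Your lower-bound sketch matches the paper's in spirit, but ``transfers directly'' hides a step. The Morisaki et al.\ bound is for ancilla-free single-qubit circuits, whereas your $V_0$ is realized with the control register (and any ancillae) as helpers. The paper closes this with sde:
\begin{itemize}
\item the T-count is at least the sde of the channel representation of the whole block-diagonal circuit;
\item the entries of $\widehat{V_0}$ are $\pm1$-combinations of entries of that representation, so $sde(\widehat{V_0})$ is no larger;
\item for single-qubit Clifford+T unitaries, the sde equals the minimal T-count.
\end{itemize}
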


Furthermore, we also show that the $\mathrm{SU(2)}\oplus\mathrm{SU(2)}$ targets can be implemented as an ancilla-free circuit.

\begin{thm} [$\mathrm{SU}(2)\oplus\mathrm{SU(2)}$ ancilla-free controlled gate synthesis] \label{thm:su2su2}
    Controlled gates whose diagonal blocks are composed of SU(2) can be implemented up to error $\varepsilon$ by a Clifford+T circuit using
    \begin{equation} \label{eq:thm2}
        3\log_2(1/\varepsilon) + o\left(\log(1/\varepsilon)\right) \quad \mathrm{w.h.p.}
    \end{equation}
    T gates and without ancilla.
    Furthermore, no Clifford+T circuit with a diagonal block form can use asymptotically fewer T gates.
\end{thm}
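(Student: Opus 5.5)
We treat the target $U = U_0 \oplus U_1$ (with $U_0, U_1 \in \mathrm{SU}(2)$) through the identity
\[
U_0\oplus U_1 = (I\oplus V)\,(U_0\otimes I\ \text{acting on target}) ,\qquad V = U_1U_0^{\dagger}\in\mathrm{SU}(2),
\]
so the controlled gate is an uncontrolled single-qubit gate $U_0$ followed by a controlled-$V$. The controlled-$V$ is then reduced, without ancillae, to single-qubit rotations conjugated by CNOTs. Diagonalize $V = W R_z(\theta) W^{\dagger}$. Then
\[
\mathrm{C}V = (I\otimes W)\,\mathrm{CNOT}\,(I\otimes R_z(-\theta/2))\,\mathrm{CNOT}\,(I\otimes R_z(\theta/2))\,(I\otimes W^{\dagger}),
\]
up to the global-phase convention that makes the blocks lie in $\mathrm{SU}(2)$ rather than $\mathrm{U}(2)$. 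Since $\det V=1$, no controlled phase appears. Merging $W^{\dagger}$ with $U_0$ and $R_z(\theta/2)$ gives
\[
U_0\oplus U_1 = (I\otimes A)\,\mathrm{CNOT}\,(I\otimes B)\,\mathrm{CNOT}\,(I\otimes C),
\]
with $A=W$, $B=R_z(-\theta/2)$ and $C=R_z(\theta/2)W^{\dagger}U_0$. Each of $A$, $B$, $C$ is a generic single-qubit gate.

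Synthesizing each of $A$, $B$, $C$ separately to error $\varepsilon/3$ with the optimal single-qubit result of \cite{Morisaki:2025mwy} would cost about $9\log_2(1/\varepsilon)$ T gates, which is too many. The key step is therefore to approximate only one factor with $\Theta(\log(1/\varepsilon))$ T-count and make the others cheap. My plan is as follows:
\begin{itemize}
\item Choose the diagonalizing freedom, namely $W \mapsto W R_z(\phi)$ together with the decomposition of $\theta$, so that $A$ and $B$ are exactly Clifford+T gates with only $o(\log(1/\varepsilon))$ T gates.
\item Absorb all remaining error into $C$, which is synthesized with the $3\log_2(1/\varepsilon)+o(\log(1/\varepsilon))$ w.h.p.\ method.
\end{itemize}
Concretely, I would first pick an exact Clifford+T $\tilde V$ with $o(\log 1/\varepsilon)$ T-count that is $\varepsilon$-close to $V$. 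This is impossible in general, since a fixed low-T-count set is too coarse.

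Hence the main obstacle is that $V$ must itself be approximated. The alternative I would try is to use the fact that $\mathrm{C}V$ only needs $U_1U_0^\dagger$, and to apply the probabilistic/mixing technique of \cite{Morisaki:2025mwy} jointly to the pair $(U_0,U_1)$. One approximates $U_0$ by $\tilde U_0$ with $3\log_2(1/\varepsilon)$ T gates and then corrects $U_1$ by a controlled low-T-count gate. I expect this to yield the count \eqref{eq:thm2} via the measure-concentration counting argument used for single qubits, with the sparse success probability giving the w.h.p.\ qualifier.

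For the lower bound, I would use a counting argument. A Clifford+T circuit of diagonal block form $\tilde U_0\oplus\tilde U_1$ with $t$ T gates has a restricted number of distinct first blocks, about $O(2^{t}\cdot\mathrm{poly}(t))$, after modding out Cliffords. In particular, $\tilde U_0$ alone is a single-qubit unitary with entries in $\mathbb{Z}[1/\sqrt2,i]$ whose denominator exponent is at most about $t$. Covering all of $\mathrm{SU}(2)$ (dimension $3$) to accuracy $\varepsilon$ needs about $\varepsilon^{-3}$ balls. This forces $t\ge 3\log_2(1/\varepsilon)-o(\log 1/\varepsilon)$ for all but a vanishing fraction of targets, matching the single-qubit bound of \cite{Morisaki:2025mwy}.

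The delicate point in the lower bound is showing that the denominator exponent of a block is bounded by the T-count of the whole two-qubit circuit. I would handle this by tracking the $\sqrt2$-adic valuation of the entries, which grows by at most one per T gate.
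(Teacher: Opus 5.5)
The upper bound has a genuine gap: no construction achieving $3\log_2(1/\varepsilon)+o(\log(1/\varepsilon))$ is actually given. Your CNOT sandwich $(I_2\otimes A)\,\mathrm{CNOT}\,(I_2\otimes B)\,\mathrm{CNOT}\,(I_2\otimes C)$ is correct but, as you note, leaves three generic rotations. The fallback fails too. For $(I_2\otimes\tilde U_0)$ followed by a controlled $W$ to be $\varepsilon$-close to $U_0\oplus U_1$, both blocks must be $\varepsilon$-close, which forces $D_\diamond(W,U_1U_0^{\dagger})<2\varepsilon$ however $\tilde U_0$ is chosen. Since $U_1U_0^{\dagger}$ is generic (Haar, for independent Haar blocks), your own covering argument shows that $W$, and hence the controlled $W$, needs about $3\log_2(1/\varepsilon)$ T gates. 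This route therefore costs at least about $6\log_2(1/\varepsilon)$. The tail bounds of \cite{Morisaki:2025mwy} control each block separately; what is missing is a way for the two blocks to \emph{share} T gates.

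The paper's key identity is
\[
TSH\oplus\zeta TH=(I_2\otimes S)\,\mathrm{CNOT}\,(I_2\otimes T)\,\mathrm{CNOT}\,(I_2\otimes H),
\]
so a controlled choice between the two Matsumoto--Amano syllables costs one T gate, the same as an uncontrolled syllable. One synthesizes $U_0,U_1$ independently with even T-counts $m_0,m_1$. The shorter normal form is then padded via $(TH)^{\dagger}=HTSH\cdot HZ$, $(THTH)^{\dagger}=\zeta^{-1}HTSHTH\cdot HX$ and $(THTHTH)^{\dagger}=\zeta^{-1}HTSHTHTSH\cdot H$. After padding, both words consist of $m=\max_i m_i$ syllables from $\{TSH,\zeta TH\}$, plus one inserted $H$ in the shorter word. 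These are applied position by position with the control qubit driving each step directly, so no ancilla is needed. The remaining pieces cost $O(m-\min_i m_i)+O(1)$ in total: the controlled $H$'s, the controlled Clifford tail and $F_i$, and the phase via $I_2\oplus\zeta I_2=T\otimes I_2$. This is $o(\log(1/\varepsilon))$ by the two-sided concentration of $m_0,m_1$ around $3\log_2(1/\varepsilon)$. Evenness and the factor $\zeta^{s_i}$ matter because the relative phase between the blocks is physical.

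For the lower bound, reducing to the first block and covering $\mathrm{SU}(2)$ is the right shape, but your invariant loses the constant. In the unitary picture, Clifford layers also carry powers of $1/\sqrt2$ (e.g.\ $H\otimes H$), so ``at most one per T gate'' is not justified. Even granting it, a bound $t$ on the $\sqrt2$-exponent of $\tilde U_0$ only confines it to roughly $4^{t}$ candidates, since a single-qubit T-count is about twice that exponent. That yields only $1.5\log_2(1/\varepsilon)$.

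Work in the channel representation instead. Cliffords are signed permutations and $\widehat T$ has entries in $\{0,1,\pm1/\sqrt2\}$, so $\mathrm{sde}(\widehat{V_0\oplus V_1})\le t$. Each entry of $\widehat{V_0}$ is the sum of the two entries of $\widehat{V_0\oplus V_1}$ labelled by $I_2\otimes\sigma_a$ and $Z\otimes\sigma_a$ against $I_2\otimes\sigma_b$, so $\mathrm{sde}(\widehat{V_0})\le t$. For a single qubit this sde equals the T-count \cite{10.5555/2685179.2685180}. That leaves $O(2^{t})$ candidates and gives the sharp bound $3\log_2(1/\varepsilon)-o(\log(1/\varepsilon))$ w.h.p.
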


\subsection{Applications}
Although Theorems \ref{thm:su2cnt} and \ref{thm:su2su2} have various applications in quantum algorithms as stated above, here we showcase some of their more primitive applications in gate synthesis.


\begin{thm} [Controlled gate synthesis] \label{thm:general_control}
    For $n\geq1$, $n$-qubit controlled gates can be implemented up to error $\varepsilon$ by a Clifford+T circuit using
    \begin{equation} \label{eq:thm3}
        6\log_2(1/\varepsilon) + O\left(\sqrt{2^n\log(1/\varepsilon)}\right) + o\left(\log(1/\varepsilon)\right) \quad \mathrm{w.h.p.}
    \end{equation}
    T gates and ancillae.
\end{thm}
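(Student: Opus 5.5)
The plan is to reduce a general $n$-qubit controlled gate to two controlled gates of the form handled by Theorem~\ref{thm:su2cnt}, plus a diagonal phase part whose cost is absorbed into the $O(\sqrt{2^n\log(1/\varepsilon)})$ term. A general $n$-qubit controlled gate has the form $\bigoplus_{j=0}^{2^n-1} U_j$ with $U_j\in\mathrm{U}(2)$.

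\textbf{Step 1: separate the phases.} Write $U_j = e^{i\phi_j} V_j$ with $V_j\in\mathrm{SU}(2)$. The gate then factors as $D\cdot\bigoplus_j V_j$, where $D=\bigoplus_j e^{i\phi_j} I_2$ is a diagonal gate acting only on the $n$ control qubits. I implement the SU(2)-block part by Theorem~\ref{thm:su2cnt}, which costs $3\log_2(1/\varepsilon)+O(\sqrt{2^n\log(1/\varepsilon)})+o(\log(1/\varepsilon))$ T gates.

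\textbf{Step 2: handle the diagonal phase $D$.} Modulo a global phase, $D$ is an $n$-qubit diagonal unitary. The idea is to realize it with SU(2)-block controlled gates.
\begin{itemize}
\item Conjugating a single $R_z$ rotation on an extra qubit, or on the last control qubit, by controls turns $D$ into $\bigoplus_{j'} \mathrm{diag}(e^{i\theta_{j'}},e^{-i\theta_{j'}})$ on $n$ qubits, indexed by the first $n-1$ qubits, together with a residual diagonal on $n-1$ qubits.
\item Recursing on the residual would cost too much. Instead, I use a single ancilla $|0\rangle$ as the target qubit. The map $\bigoplus_j \mathrm{diag}(e^{i\phi_j},e^{-i\phi_j})$ applied to (controls)$\otimes|0\rangle$ imprints exactly the phase $e^{i\phi_j}$.
\item So $D$ equals an $n$-qubit-controlled $\mathrm{SU}(2)^{\oplus 2^n}$ gate acting on a fresh ancilla in state $|0\rangle$.
\end{itemize}
Applying Theorem~\ref{thm:su2cnt} to this gate gives another $3\log_2(1/\varepsilon)+O(\sqrt{2^n\log(1/\varepsilon)})+o(\log(1/\varepsilon))$ T gates. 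The two costs add to the claimed $6\log_2(1/\varepsilon)$ leading term. Since the ancilla stays in $|0\rangle$ exactly, it factors out, up to an approximation error.

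\textbf{Step 3: errors and probability.} Run each synthesis at precision $\varepsilon/2$. The triangle inequality for the diamond norm then gives total error $\varepsilon$, and halving $\varepsilon$ only shifts the counts by $O(1)$. For the ancilla-based step, the approximating circuit may leave the ancilla slightly entangled. I would bound the induced channel error by the operator-norm error of the block gate, which is at most $\varepsilon/2$ after tracing out, possibly up to a constant factor that is absorbed the same way.

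For ``w.h.p.'', the target map $U\mapsto(V_j)_j,(\phi_j)_j$ pushes Haar (or the relevant) measure forward to a measure under which Theorem~\ref{thm:su2cnt}'s exceptional set still has small measure. A union bound over the two applications then finishes the argument.

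\textbf{Main obstacle.} I expect the main obstacle to be this measure and genericity argument, together with the phase ambiguity: $e^{i\phi_j}$ is defined only up to sign, which must be fixed consistently. I would handle it by choosing branches measurably and noting that the exceptional sets are small for product and Haar-type distributions.
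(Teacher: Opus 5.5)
Your construction is the paper's own. You split $U_j=e^{i\phi_j}V_j$ and apply Theorem~\ref{thm:su2cnt} to $\bigoplus_j V_j$. You then realize the phase part by applying Theorem~\ref{thm:su2cnt} to $\bigoplus_j \mathrm{diag}(e^{i\phi_j},e^{-i\phi_j})$ with a fresh $|0\rangle$ ancilla as target, each at precision $\varepsilon/2$; the paper phrases this as $D\oplus D^\dagger$ conjugated by swaps. No constant factor is lost in the ancilla step: appending $|0\rangle$ and tracing out are channels, so they cannot increase diamond distance, and the induced channel is within $\varepsilon/2$ of $\mathcal{E}_D$. The sign ambiguity in $e^{i\phi_j}$ is harmless as long as the same branch is used in both factors.

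Where your argument does not go through is the w.h.p.\ transfer for the second application. Every block of $\bigoplus_j \mathrm{diag}(e^{i\phi_j},e^{-i\phi_j})$ lies in the one-parameter subgroup of $z$-rotations, which is Haar-null in $\mathrm{SU}(2)$. So the pushforward of the target distribution is singular with respect to the Haar measure under which Theorem~\ref{thm:su2cnt} (through Lemma~\ref{lmm:bound} and Appendix~\ref{sec:even_t}) bounds the exceptional set. That set could a priori contain every such target, and no union bound repairs this. The paper's proof makes the same leap without comment, so this is not a departure from it, but your stated justification is not valid.

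A cheap fix uses the fact that the ancilla is discarded anyway. It then suffices to implement $\bigoplus_j R\,\mathrm{diag}(e^{i\phi_j},e^{-i\phi_j})$ for any fixed $R\in\mathrm{SU}(2)$, since this maps $|j\rangle|0\rangle$ to $e^{i\phi_j}|j\rangle\otimes R|0\rangle$ with an ancilla factor independent of $j$. Choosing $R$ Haar-random makes each block marginally Haar, whatever the $\phi_j$ are. The part of Lemma~\ref{lmm:bound} that the construction actually needs is an upper tail for $\max_i m_i$ and a lower tail for $\min_i m_i$. Both are proved by union bounds, i.e., from marginals only, so the $3\log_2(1/\varepsilon)+O(\sqrt{2^n\log(1/\varepsilon)})+o(\log(1/\varepsilon))$ count carries over. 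Otherwise you would need a separate T-count concentration result for random $z$-rotations to feed into the same construction.
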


Using Theorem 2, we can also synthesize an ancilla-free Clifford+T circuit to approximate controlled gates.

\begin{thm} [$\mathrm{SU}(2)^{\oplus 2^n}$ ancilla-free controlled gate synthesis] \label{thm:su2noancilla}
    For $n\geq2$, $n$-qubit controlled gates whose diagonal blocks are composed of SU(2) can be implemented up to error $\varepsilon$ by a Clifford+T circuit using
    \begin{equation} \label{eq:thm4}
        \frac{3(2^n+1)}{2}\log_2(1/\varepsilon) + O(n2^n) + o\left(2^n\log(1/\varepsilon)\right) \quad \mathrm{w.h.p.}
    \end{equation}
    T gates and without ancilla.
\end{thm}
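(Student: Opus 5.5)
We reduce the ancilla-free $2^n$-block synthesis to repeated applications of Theorem~\ref{thm:su2su2}, with exactly controlled Clifford+T gates used as glue. Write the target as $U=\bigoplus_{x\in\{0,1\}^n} V_x$ with $V_x\in\mathrm{SU}(2)$ acting on the last qubit. A block-diagonal $\mathrm{SU}(2)^{\oplus 2^n}$ gate factors as a product of uniformly controlled single-qubit rotations, and the natural recursion peels off one control qubit at a time. The sum of terms in the target count, $\frac{3(2^n+1)}{2}\log_2(1/\varepsilon)+O(n2^n)+o(2^n\log(1/\varepsilon))$, suggests the structure: about $2^{n-1}+\tfrac12$ approximately synthesized $\mathrm{SU}(2)\oplus\mathrm{SU}(2)$ pieces, each costing $3\log_2(1/\varepsilon')$ T gates by Theorem~\ref{thm:su2su2}, plus $O(n2^n)$ T gates of exact Clifford+T multi-controlled Pauli/Toffoli gadgets that need no ancilla.

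\textbf{Step 1 (exact decomposition).} We pair the blocks according to the last control bit: for each $y\in\{0,1\}^{n-1}$ set $W_y=V_{y0}\oplus V_{y1}\in\mathrm{SU}(2)\oplus\mathrm{SU}(2)$, acting on the last control and the target. Then $U=\prod_y C_y(W_y)$, where $C_y$ means controlled on the first $n-1$ qubits being $y$. We cannot afford to make each $W_y$ controlled directly. Instead we use the standard trick that for $A\in\mathrm{SU}(2)$, a controlled-$A$ can be written $A_1\,\mathrm{CX}\,A_2\,\mathrm{CX}\,A_3$ with $A_1A_2A_3=I$; more generally, a multiply controlled $\mathrm{SU}(2)$ gate equals products of uncontrolled $\mathrm{SU}(2)$ factors interleaved with multi-controlled $X$ gates.

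\textbf{Step 2 (Gray-code chaining).} Applying this at the level of $\mathrm{SU}(2)\oplus\mathrm{SU}(2)$ blocks, we traverse the $2^{n-1}$ patterns $y$ in Gray-code order. This gives a product of $2^{n-1}+1$ two-qubit gates in $\mathrm{SU}(2)\oplus\mathrm{SU}(2)$, interleaved with $2^{n-1}$ gates that are multi-controlled $X$ on the target (or CNOT-like parity gadgets). Each multi-controlled $X$ on at most $n$ controls can be implemented exactly and ancilla-free with $O(n)$ T gates, using borrowed-qubit (dirty-ancilla) constructions; this requires $n\ge2$, so that the other qubits are available to be borrowed. These gadgets account for the $O(n2^n)$ term.

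\textbf{Step 3 (error budget and count).} We synthesize each of the $m=2^{n-1}+1$ two-qubit factors with Theorem~\ref{thm:su2su2} to error $\varepsilon/m$. The total cost is $m\cdot 3\log_2(m/\varepsilon)+o(m\log(1/\varepsilon))=\frac{3(2^n+2)}{2}\log_2(1/\varepsilon)+O(n2^n)+o(2^n\log(1/\varepsilon))$. To reach the claimed $\frac{3(2^n+1)}{2}$ we save half a factor. One option is to absorb the first or last factor into a neighbour: the end factors may be chosen trivial on one block, so that they reduce to a single-qubit $\mathrm{SU}(2)$ gate costing only $\tfrac32$ of a full factor's leading term, or to be merged. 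The "w.h.p." qualifier is handled by the union bound: each factor is a fixed continuous function of the random target, so each is generic with probability $1-1/\mathrm{poly}$, and there are $O(2^n)$ factors with $n$ fixed.

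\textbf{Main obstacle.} The exact bookkeeping of the decomposition is the hardest part. Concretely, we must show that $U$ factors into exactly $2^{n-1}+\tfrac12$ "effective" two-qubit $\mathrm{SU}(2)\oplus\mathrm{SU}(2)$ syntheses, plus cheap exact gadgets, while ensuring that all of these factors are generic enough for Theorem~\ref{thm:su2su2} to apply.
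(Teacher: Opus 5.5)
Your Step 2 carries the whole argument, and it is asserted rather than proved. In the form you state it, with $\mathrm{SU}(2)\oplus\mathrm{SU}(2)$ factors on the last control and the target, interleaved with $X$-type gadgets on the target, it is false for $n\ge 3$.

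\textbf{Why Step 2 fails for $n\ge 3$.} Fix the value $b$ of the last control bit. Each of your $N=2^{n-1}+1$ factors restricts to some $a_i\in\mathrm{SU}(2)$. Each gadget restricts to $g_i(y)\in\{I,X\}$, depending on the remaining control bits $y$. You therefore need $a_1g_1(y)a_2\cdots g_{N-1}(y)a_N=V_{yb}$ for all $2^{n-1}$ values of $y$. Because $e^{i\phi X}$ commutes with both $I$ and $X$, the substitution $a_i\mapsto a_ie^{i\phi X}$, $a_{i+1}\mapsto e^{-i\phi X}a_{i+1}$ leaves every product unchanged. This gives one gauge direction per junction. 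So the map $(a_1,\dots,a_N)\mapsto(V_{yb})_y$ has rank at most $3N-(N-1)=2^n+3$. Its codomain $\mathrm{SU}(2)^{2^{n-1}}$ has dimension $3\cdot 2^{n-1}$, which is larger as soon as $n\ge 3$. By Sard's theorem the reachable targets then form a null set. The Gray-code trick works for uniformly controlled rotations about a fixed axis precisely because those rotations commute and are inverted by $X$-conjugation; general $\mathrm{SU}(2)$ blocks have neither property. For $n=2$ a three-factor form with two CNOTs does exist, but it costs $9\log_2(1/\varepsilon)$ rather than $7.5\log_2(1/\varepsilon)$.

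\textbf{The count is off even granting Step 2.} Your leading term is $\frac{3(2^n+2)}{2}\log_2(1/\varepsilon)$. For fixed $n$, the excess $1.5\log_2(1/\varepsilon)$ is not $o(2^n\log(1/\varepsilon))$. Your repair does not remove it. An end factor made uncontrolled ($a\oplus a$) or trivial on one block ($a\oplus I$) still needs $3\log_2(1/\varepsilon)$ T gates w.h.p., no fewer than a generic factor under Theorem~\ref{thm:su2su2}. Dropping a factor is excluded by the same dimension count.

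\textbf{The missing idea} is the paper's half-cost lemma. Suppose only the ratio $U_1U_2^\dagger$ of a block gate $U_1\oplus U_2$ must approximate a given $U\in\mathrm{SU}(2)$. Take an even-T-count sequence for $U$ and cut it into two halves of equal T-count, called $U_1^\prime$ and $U_2^{\prime\dagger}$. Then build $U_1^\prime\oplus U_2^\prime$ ancilla-free with the construction of Theorem~\ref{thm:su2su2}. That construction costs the larger block T-count plus the difference of the two block T-counts, up to an additive constant. Balanced halves therefore cost only $1.5\log_2(1/\varepsilon)+o(\log(1/\varepsilon))$.

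The paper then uses a decomposition of $\mathrm{SU}(2)^{\oplus 2^n}$, taken from the authors' Clifford+V work, into $2^n-1$ such ratio-type pieces plus one ordinary single-qubit synthesis, each at error $\varepsilon/2^n$. The total is $(2^n-1)\cdot 1.5\log_2(2^n/\varepsilon)+3\log_2(1/\varepsilon)+o(2^n\log(1/\varepsilon))$, which is exactly the claimed bound. So the coefficient $\frac{3(2^n+1)}{2}=1.5(2^n-1)+3$ comes from $2^n-1$ half-cost pieces and one full-cost gate, not from saving half of one factor. Likewise, the $O(n2^n)$ term is the price of the $\varepsilon/2^n$ error split, not of Toffoli gadgets.
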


Furthermore, general 2-qubit gates SU(4) beyond controlled gates can also be synthesized.

\begin{thm} [SU(4) gate synthesis] \label{thm:su4synthesis}
    SU(4) gates can be implemented up to error $\varepsilon$ by a Clifford+T circuit using
    \begin{equation} \label{eq:thm5}
        9\log_2(1/\varepsilon) + o(\log(1/\varepsilon))
        \quad \mathrm{w.h.p.}
    \end{equation}
    T gates and without ancilla.
\end{thm}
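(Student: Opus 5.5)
We would prove Theorem~\ref{thm:su4synthesis} by reducing a general $\mathrm{SU}(4)$ gate to three $\mathrm{SU}(2)\oplus\mathrm{SU}(2)$ controlled blocks, interleaved with Clifford gates, and then applying Theorem~\ref{thm:su2su2} to each block. Each block costs $3\log_2(1/\varepsilon)+o(\log(1/\varepsilon))$ T gates, so the total would be $9\log_2(1/\varepsilon)+o(\log(1/\varepsilon))$, with no ancilla.

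\emph{Decomposition.} We start from the KAK (Cartan) decomposition
\[
U=(A_1\otimes B_1)\,\exp\!\bigl(i(aXX+bYY+cZZ)\bigr)\,(A_2\otimes B_2),
\qquad A_j,B_j\in\mathrm{SU}(2).
\]
The aim is to regroup this into three multiplexed single-qubit gates, i.e.\ three gates of the form $|0\rangle\langle0|\otimes V_0+|1\rangle\langle1|\otimes V_1$ with $V_0,V_1\in\mathrm{SU}(2)$, possibly conjugated by Cliffords such as SWAP or CNOT.

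\begin{itemize}
\item A local gate $A\otimes B$ is itself such a block up to a SWAP: first $I\otimes B$, then $A$ on the other qubit written as $A\oplus A$ after swapping. This costs two blocks per local layer, which is too many, so adjacent layers must be merged.
\item $\exp(i(aXX+bYY+cZZ))$ is diagonalised by a Clifford (the magic basis up to a Clifford change) into a diagonal unitary, which is a multiplexed $R_z$ block.
\item The standard quantum Shannon / cosine--sine decomposition gives $U=(V_0\oplus V_1)\,(\text{multiplexed }R_y)\,(W_0\oplus W_1)$. The middle factor is a multiplexed $R_y$ on the first qubit controlled by the second, which a SWAP turns into a block $R_y(\theta_0)\oplus R_y(\theta_1)$.
\end{itemize}

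The cosine--sine route is the cleanest, since it gives exactly three blocks. One issue is phases: $V_0\oplus V_1$ may lie in $\mathrm{U}(2)\oplus\mathrm{U}(2)$ rather than $\mathrm{SU}(2)\oplus\mathrm{SU}(2)$. We would fix this by moving the relative phases $\mathrm{diag}(e^{i\phi},e^{-i\phi})$ on the control qubit through the middle multiplexor. A $Z$-rotation on the control commutes past the first multiplexor, and on the middle one it can be absorbed by a change of basis. If needed, we use a $\mathrm{SU}(2)\oplus\mathrm{SU}(2)$ version of the middle block with $R_y$ replaced by a general element, so that all determinants become $1$ up to a global phase.

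\emph{Error and T-count accounting.} We would approximate each block to error $\varepsilon/3$. By subadditivity of the diamond distance under composition the total error is at most $\varepsilon$. Since $\log_2(3/\varepsilon)=\log_2(1/\varepsilon)+O(1)$, the count per block is unchanged to leading order. The Clifford conjugations (SWAP, CNOT, magic-basis change) add no T gates.

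\emph{``W.h.p.'' and the main obstacle.} The w.h.p.\ qualifier needs care, because Theorem~\ref{thm:su2su2} holds for all but an inverse-polynomially small fraction of targets with respect to the natural measure on $\mathrm{SU}(2)\oplus\mathrm{SU}(2)$. We must argue that, for a Haar-random $U\in\mathrm{SU}(4)$, each of the three induced blocks falls outside the bad set except with small probability. The plan is to choose the decomposition measurably and to show that the induced distribution on each block has bounded density with respect to the reference measure. The two outer blocks, for instance, are Haar-random after conditioning on the middle one, by left and right invariance. A union bound over the three blocks then gives the claim. The middle block, whose distribution is not uniform (a cosine--sine angle density), is where we expect the most work. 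If its density turns out to be unbounded, we would first try randomising by inserting a uniformly random diagonal $\mathrm{SU}(2)\oplus\mathrm{SU}(2)$ factor and its inverse into the neighbouring blocks, which does not change $U$ but spreads the distribution of the middle block.
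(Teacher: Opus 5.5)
Your proposal is correct, but it reaches the constant $9$ through a different decomposition from the paper's.

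\textbf{Your route.} You use the cosine--sine (quantum Shannon) decomposition $U\propto(A_0\oplus A_1)\,\mathrm{SWAP}\,(M_0\oplus M_1)\,\mathrm{SWAP}\,(B_0\oplus B_1)$ with all six blocks in $\mathrm{SU}(2)$. Your phase repair works: the relative phase $\mathrm{diag}(e^{i\phi},e^{-i\phi})\otimes I_2$ commutes with the outer multiplexor, and in the swapped frame it becomes a common $R_z(\phi)$ on the target of the middle one, so $M_k=R_z(\phi)R_y(\theta_k)R_z(\phi')$. You then call Theorem~\ref{thm:su2su2} three times, giving $3+3+3$.

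\textbf{The paper's route.} The paper uses $\mathrm{SU}(4)/\mathbb{Z}_2\cong\mathrm{SO}(6)$ to peel off Pauli rotations ($\mathrm{SO}(6)\to\mathrm{SO}(5)\to\mathrm{SO}(4)$). This yields four pieces: one single-qubit gate $V_3$; two blocks $V_1\oplus V_2$ and $V_4\oplus V_5$; and a final full block $V_6\oplus V_7$ from Theorem~\ref{thm:su2su2}. For the two middle blocks only the product $V_1V_2^\dagger$ (resp.\ $V_4V_5^\dagger$) has to approximate a target, so Lemma~\ref{lmm:su2su2su2} builds each with $1.5\log_2(1/\varepsilon)$ T gates. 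The total is $3+1.5+1.5+3$.

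\textbf{What each buys.} Your route is more elementary. It needs no $\mathrm{SO}(6)$ bookkeeping and no half-cost lemma, and the error accounting is one triangle inequality rather than the chain of intermediate circuits in the paper's appendix. The paper's route showcases the halving trick of Lemma~\ref{lmm:su2su2su2}. You also treat the w.h.p.\ qualifier more seriously than the paper, which never checks the distribution of the induced targets it feeds to its subroutines.

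\textbf{Two points to tighten.}
\begin{enumerate}
\item The outer blocks are Haar only if you randomize the gauge of the decomposition. For example, decompose $K_1^{-1}UK_2^{-1}$ for independent Haar $K_1,K_2\in\mathrm{S}(\mathrm{U}(2)\oplus\mathrm{U}(2))$. This also makes $\phi,\phi'$ uniform and independent of the angles $\theta_k$. A fixed measurable gauge does not inherit left/right invariance.
\item Because $M_0$ and $M_1$ share $\phi,\phi'$, the joint law of the middle block lives on a four-dimensional subset of $\mathrm{SU}(2)^2$ and has no density at all. So Theorem~\ref{thm:su2su2}, read as a statement about product Haar measure, cannot be applied as a black box. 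This is harmless, because the T-count analysis (Lemma~\ref{lmm:bound}) uses only union bounds over the two components. Each marginal $M_k$ has bounded density with respect to Haar: the Jacobi density of $\theta_k$ vanishes like $\sin 2\theta_k$, exactly as the Haar density does in Euler angles.
\end{enumerate}
If you prefer your randomization fix instead, note that a general random diagonal factor cannot be absorbed into the neighbouring blocks without leaving an uncompensated relative phase. The one-parameter family $e^{iwZ\otimes Z}$ is the continuous family that lies in the block group in both the swapped and unswapped frames.
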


{\red This substantially reduces T-count compared to the $21 \log_2(1/\varepsilon)+o(\log(1/\varepsilon))$ scaling obtained via the KAK decomposition.}

\subsection{Discussion}
In this study, the minimum T-count of multi-qubit controlled gate synthesis was determined to have $\varepsilon$ scaling less than or equal to that of the single-qubit gate synthesis, but whether it can go lower than this is unclear, when general SU($2^n$) form Clifford+T circuits are allowed.
However, we expect that it will not. This is because if the coefficient of the logarithm is $k<3$ for some $n$ when the general form is allowed, then performing probabilistic synthesis on the top-left block will imply that adaptive single-qubit synthesis could be performed with a T-count of $(k/2) \log_2(1/\varepsilon) + o(\log(1/\varepsilon))$.
This will be lower than the best published result of $1.5 \log_2(1/\varepsilon) + o(\log(1/\varepsilon))$ \cite{Kliuchnikov2023shorterquantum}.
In addition, the coefficient of $\sqrt{2^n \log(1/\varepsilon)}$ remains unclear.
These two issues are left as open problems.

Furthermore, the accidental isomorphism between $\mathrm{SU(4)}/\mathbb{Z}_2$ and SO(6) guarantees that all SU(4) gates can be constructed by using the construction method in the proof of Theorem \ref{thm:su4synthesis}.
For a general SU($2^n$), no such guarantee exists, so a similar construction cannot be made. 
However, if we assume that SU($2^n$) can be decomposed into $(4^n-1)$ rotations, then the leading term for the T-count with respect to $\varepsilon$ would be $(3\cdot 2^n-3)\log_2(1/\varepsilon)$.

\section{Preliminaries}
\subsection{Basic notations}
There are $2^n$ computational bases for $n$-qubit pure states, and we denote them as $|0\rangle,|1\rangle,\cdots,|2^n-1\rangle$.
Each basis $|i\rangle=|i_ni_{n-1}\cdots i_1 {}_{(2)}\rangle$ corresponds to the first qubit being $|i_1\rangle$, the second qubit being $|i_2\rangle$, $\cdots$, and the $n$-th qubit being $|i_n\rangle$ {\red, where the subscript ${}_{(2)}$ denotes the binary representation}.
An $n$-qubit quantum gate is a member of the $2^n\times 2^n$ unitary group $\mathrm{U}(2^n)=\{U\in \mathbb{C}^{2^n\times 2^n}|~U^\dagger U=UU^\dagger=I_{2^n}\}$, where $I_{2^n}$ denotes the identity.
Since the global phase has no physical meaning, we can identify two gates that differ only by the global phase, and the gates are also considered as members of the special unitary group $\mathrm{SU}(2^n)=\{U\in \mathrm{U}(2^n)|~ \det(U)=1\}$.
Among $(n+1)$-qubit gates, we denote an $n$-qubit controlled gate as follows:
\begin{equation}
    \begin{pmatrix}
        A_1\\
        &A_2\\
        &&\ddots\\
        &&&A_{2^n}
    \end{pmatrix}
    \in \mathrm{U}(2^{n+1})
    \qquad (A_1,A_2,\cdots,A_{2^n}\in \mathrm{U(2)}).
\end{equation}
Let $\mathrm{U}(2)^{\oplus 2^n}$ be the set consisting of all such gates.
In particular, $\mathrm{SU}(2)^{\oplus 2^n}$ is a subset of $\mathrm{U}(2)^{\oplus 2^n}$ consisting of elements where all diagonal blocks $A_1,A_2,\cdots,A_{2^n}$ belong to $\mathrm{SU}(2)$.

A quantum channel from $n$-qubit states to $n$-qubit states is completely positive and trace preserving (CPTP) mapping.
Given two quantum channels, the diamond distance is a metric that measures the maximum {\red bias of the success} probability of distinguishing between them when the input can be freely chosen.
\begin{dfn} [Diamond distance]
    Let $\mathcal{E}_1, \mathcal{E}_2$ be two quantum channels from $n$-qubit states to $n$-qubit states.
    \begin{equation}
        D_\diamond(\mathcal{E}_1, \mathcal{E}_2)=
        \frac{1}{2}\max_{\rho}\|((\mathcal{E}_1-\mathcal{E}_2)\otimes\mathcal{I}_n)(\rho) \|,
    \end{equation}
    where $\mathcal{I}_n$ is an identity channel on a $n$-qubit system and $\|\cdot\|$ is a trace norm.
\end{dfn}
A quantum channel $\mathcal{E}_U$ of unitary $U$ is $\mathcal{E}_U(\rho)=U\rho U^\dagger$ where $\rho$ is a positive semidefinite matrix with trace 1, or a density operator.
The diamond {\red distance} between two unitary channels $U, V$ can be calculated by the following equation:
\begin{equation}
    D_\diamond(\mathcal{E}_U, \mathcal{E}_V) = \sqrt{1-\min_\rho\left|\mathrm{tr}\left(\rho UV^\dagger\right)\right|^2}.
\end{equation}
{\red Throughout the paper, we adopt a slight abuse of notation and use $D_\diamond(U,V)$ to represent $D_\diamond(\mathcal{E}_U, \mathcal{E}_V)$.}

Channel representation $\widehat{U}$ of an $n$-qubit unitary $U\in\mathrm{U}(2^n)$ can be explicitly written as a member of $\mathrm{O}(4^n)$:
\begin{equation}
    \widehat{U}_{i,j} = \frac{1}{2^n}\mathrm{tr}(UP_iU^\dagger P_j),
\end{equation}
where $i,j=0,1,\cdots,4^n-1$ and
\begin{equation} \label{eq:pauli_time}
    P_i=\bigotimes_{k=1}^{n}\sigma_{i_k}
    \qquad
    (\sigma_0=I_2,~\sigma_1=X,~\sigma_2=Y,~\sigma_3=Z,~(i_1i_2\cdots i_n)_{(4)}=i).
\end{equation}

\subsection{Asymptotic notation}
Let $(\Omega, \mu, \mathcal{F})$ be a probability space.
Given an event {\red $E_\varepsilon$ or $F_{n,\varepsilon}$}
, which depends on some parameters $n$ and $\varepsilon$,
we say that
\begin{equation}
    E_\varepsilon ~(F_{n,\varepsilon}) ~ \text{holds w.h.p.} ~~ \text{as} ~~ \varepsilon\rightarrow 0~(n\rightarrow\infty,~ \varepsilon\rightarrow 0)
\end{equation}
if the following equation holds for some constant $c,d>0$:
\begin{equation}
    \Pr_{\omega\sim\mu}[E_\varepsilon] = 1-O(\varepsilon^c) ~~ \text{as} ~~ \varepsilon\rightarrow0
    ~~\left(\Pr_{\omega\sim\mu}[F_{n,\varepsilon}] = 1-O(n^{-d}\varepsilon^c) ~~ \text{as} ~~ n\rightarrow\infty,~ \varepsilon\rightarrow0
    \right).
\end{equation}

\subsection{Clifford+T gate synthesis}
A set of single-qubit Pauli gates is defined as follows:
\begin{equation}
    \mathrm{Pauli}_1 = \left\{
    i^k\sigma\middle|~
    k\in\mathbb{Z},~
    \sigma\in\left\{
    I_2=\begin{pmatrix}
        1&0\\
        0&1
    \end{pmatrix},\,
    X=\begin{pmatrix}
        0&1\\
        1&0
    \end{pmatrix},\,
    Y=\begin{pmatrix}
        0&-i\\
        i&0
    \end{pmatrix},\,
    Z=\begin{pmatrix}
        1&0\\
        0&-1
    \end{pmatrix}
    \right\}
    \right\},
\end{equation}
and a set of $n$-qubit Pauli gates is defined as follows:
\begin{equation}
    \mathrm{Pauli}_n = \left\{
    \bigotimes_{i=1}^n \sigma_i
    \middle|~
    \sigma_i\in\mathrm{Pauli}_1
    \right\}.
\end{equation}
When a group operation is defined as the matrix product, it forms the $n$-qubit Pauli group, and its normalizer is the $n$-qubit Clifford group written as $\mathrm{Clifford}_n$.
Ignoring the global phase, all elements of $\mathrm{Clifford}_n$ can be identified with elements that can be constructed using a finite number of $H$, $S$, and CNOT gates defined as follows (for $n=1$, a CNOT gate is not needed):
\begin{equation}
    H= \frac{1}{\sqrt{2}}\begin{pmatrix}
        1&1\\
        1&-1
    \end{pmatrix},~
    S=\begin{pmatrix}
        1&0\\
        0&i
    \end{pmatrix},~
    \mathrm{CNOT}=\begin{pmatrix}
        1&0&0&0\\
        0&1&0&0\\
        0&0&0&1\\
        0&0&1&0
    \end{pmatrix}.
\end{equation}
Adding the $T$ gate defined below to the Clifford gate set yields Clifford+T, which allows approximation of any unitary with arbitrary precision:
\begin{equation}
    T=\begin{pmatrix}
        1&0\\
        0&\zeta
    \end{pmatrix},
\end{equation}
where $\zeta=e^{i\pi/4}$.
Given a target $n$-qubit unitary $U\in\mathrm{U(2^n)}$ and an error $\varepsilon>0$, Clifford+T gate synthesis within $\varepsilon$ with $n_a$ ancillae consists of finding a Clifford+T gate sequence that satisfies
$D_\diamond(U, C_1T_1C_2T_2\cdots T_kC_{k+1}) < \varepsilon$
, where $C_1,C_2,\cdots,C_{k+1}\in\mathrm{Clifford}_{n+n_a}$ and $T_1, T_2, \cdots, T_k$ are $T$ gates on some qubits.
T-count is the number of $T$ gates in a Clifford+T gate sequence or circuit.

Some controlled gates can be exactly implemented by Clifford+T.
Specifically, the controlled $H$ gate $I_2\oplus H$ can be exactly implemented by a T-count 2 Clifford+T circuit, and the controlled $S$ gate $I_2\oplus S$ can be exactly implemented by a T-count 3 Clifford+T circuit \cite{Amy2012AMA}.

For single-qubit gate synthesis, Matsumoto and Amano's normal form expresses how Clifford+T behaves.
It states that any Clifford+T circuit can be reduced to a T-count $m$ circuit of the following form:
\begin{equation} \label{eq:manormal}
    F \left(\prod_{i=1}^{m}M_i\right) C
    \qquad
    (F\in\{I_2, ~H, ~SH\},~ M_i\in \{TH, ~TSH\},~ C\in\mathrm{Clifford}_1).
\end{equation}
Furthermore, among the Clifford+T circuits with the same unitary operation, this form is unique and has the minimum T-count.

\section{$n$-qubit controlled gate synthesis}
In this section, we prove our main result, Theorem \ref{thm:su2cnt}, which we restate below:
\begin{rthm1} [$\mathrm{SU}(2)^{\oplus 2^n}$ controlled gate synthesis with optimal T-count]
    For $n\geq1$, $n$-qubit controlled gates whose diagonal blocks are composed of SU(2) can be implemented up to error $\varepsilon$ by a Clifford+T circuit using
    \begin{equation}
        \tag{\ref{eq:thm1}}
        3\log_2(1/\varepsilon) + O\left(\sqrt{2^n\log(1/\varepsilon)}\right) + o\left(\log(1/\varepsilon)\right)
        \quad \mathrm{w.h.p.}
    \end{equation}
    T gates and ancillae.
    Furthermore, when $n$ is fixed, no Clifford+T circuit with a diagonal block form can use asymptotically fewer T gates.
\end{rthm1}
Note that only $3\log_2(1/\varepsilon)+O(n)+o(\log(1/\varepsilon))$ w.h.p. ancillae need to be clean in Equation (\ref{eq:thm1}) and the remaining ancillae {\red can be} dirty{\red, i.e., their quantum state can be arbitrary}.
We will show that by constructing a Clifford+T circuit that approximates a given controlled gate, it can be approximated with a T-count represented by Equation (\ref{eq:thm1}).
Furthermore, from a discussion using channel representation, we will show a lower bound for the T-count for a fixed number of qubits and demonstrate that it matches Equation (\ref{eq:thm1}).

\subsection{Circuit construction} \label{ssec:construction}
Given a $n$-qubit controlled gate $U_1\oplus U_2\oplus\cdots\oplus U_{2^n}\in\mathrm{SU(2)^{\oplus2^n}}$ and an error $\varepsilon$, we first apply $\varepsilon$-approximate single-qubit gate synthesis to each of $U_1,U_2,\cdots,U_{2^n}$ and obtain $2^n$ Clifford+T even-number T-count $m_i$ circuits (we can always have even-number T-count circuits; see details in Appendix \ref{sec:even_t} for the proof.) in Matsumoto and Amano's normal form (Equation (\ref{eq:manormal})):
\begin{equation} \label{eq:normal_form}
    D_\diamond \left(U_i, ~F_i\left(\prod_{j=1}^{m_i} M_{j,i}\right) C_i\right) < 
    \varepsilon \quad 
    (i=1,2,\cdots,2^n,~ F_i\in\{I_2,~ H,~ SH\},~ M_{j,i}\in\{TH,~ TSH\},~ C_i\in\mathrm{Clifford}_1).
\end{equation}
By multiplying a complex number of unit modulus, we can set $C_i$ to be in the form of the product of $S$ and $H$ gates.
We will use the following form, which is equivalent to the one derived by Matsumoto and Amano's normal form:
\begin{equation} \label{eq:my_form}
    D_\diamond\left(
    F_i \left(\prod_{j=1}^{m_i} M_{j,i}\right) C_i
    ,~~
        \zeta^{s_i} F_i \left(\prod_{j=1}^{\frac{m_i+m}{2}} M_{j,i}^\prime\right) H\, \left(\prod_{j=\frac{m_i+m}{2}+1}^{m}M_{j,i}^\prime\right) C_i^\prime 
    \right)=0
    \qquad\qquad
\end{equation}
\begin{equation*}
    \left(
    M_{j,i}^\prime=
    \begin{cases}
        \zeta TH & \text{if } j\leq m_i \text{ and } M_{j,i} = TH, \\
        TSH & \text{if } j\leq m_i \text{ and } M_{j,i} = TSH, \\
        \zeta TH & \text{if } m_i+1\leq j\leq \frac{m_i+m}{2}, \\
        \zeta TH & \text{if }\frac{m_i+m}{2}+1\leq j \text{ and } j\equiv \frac{m_i+m}{2}+2 \mod 3, \\
        TSH & \text{otherwise},
    \end{cases}~
    C_i^\prime=
    \begin{cases}
        HZC_i & \text{if } \frac{m-m_i}{2}\equiv 1 \mod 3, \\
        HXC_i & \text{if } \frac{m-m_i}{2}\equiv 2 \mod 3, \\
        HC_i & \text{if } \frac{m-m_i}{2}\equiv 0 \mod 3
    \end{cases}
    \right),
\end{equation*}
where $s_i$ is chosen from $\{0,1,\cdots,7\}$ such that the second argument in Equation (\ref{eq:my_form}) is closest to $U_i$ in the Frobenius norm {\red and $m=\max_im_i$}.
Equation (\ref{eq:my_form}) holds because
\begin{eqnarray}
    (TH)^\dagger &=& HTSH\cdot HZ, \\
    (THTH)^\dagger &=& \zeta^{-1} HTSHTH\cdot HX, \\
    (THTHTH)^\dagger &=& \zeta^{-1} HTSHTHTSH\cdot H.
\end{eqnarray}
Then, we consider a circuit that consists of the following subcircuits to approximate $U$ within the error $\varepsilon$, where $\mathcal{A, B, C, D}$ are quantum registers with $n, 1, 2m-\min_i(m_i)+O(1), O(\sqrt{m2^n})$ qubits each ($\mathcal{A}$ is composed of the control qubits of $U$, $\mathcal{B}$ is composed of the target qubit of $U$, $\mathcal{C}$ is composed of clean ancillae initialized to $|0\rangle$, and $\mathcal{D}$ is composed of dirty ancillae):
\begin{enumerate}
    \item (Boolean layer) controlled Boolean function on $\mathcal{AC}$ using $\mathcal{D}$;
    \item controlled gates on $\mathcal{BC}$;
    \item inverse Boolean layer to make $C$ clean again.
\end{enumerate}
The details of each subcircuit are described below.

For any Boolean function $f: \{0,1\}^n\rightarrow\{0,1\}^b$, we can implement a unitary $U_f$ defined as below by the divide-and-conquer method \cite{Low2018TradingTG}:
\begin{equation}
    U_f|x\rangle|y\rangle = |x\rangle|y\oplus f(x)\rangle \qquad (x=0,1,\cdots,2^n,~ y=0,1,\cdots,2^b).
\end{equation}
\begin{lmm} [Boolean oracle] \label{lmm:boolean}
    Let $f: \{0,1\}^n\rightarrow\{0,1\}^b$ be an arbitrary Boolean function.
    $U_f$ can be exactly implemented by a Clifford+T circuit using
    \begin{equation}
        O(\sqrt{b2^n})
    \end{equation}
    $T$ gates and dirty ancillae.
\end{lmm}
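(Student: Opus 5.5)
The plan is to follow the divide-and-conquer (``SELECT-SWAP'') construction of Low, Kliuchnikov and Schaeffer \cite{Low2018TradingTG}, balancing a multiplexing stage and a swapping stage so that each contributes $O(\sqrt{b2^n})$ $T$ gates. Split the $n$ input bits as $x=(x_h,x_\ell)$, with $x_h$ of $n-k$ bits and $x_\ell$ of $k$ bits, where $2^k\approx\sqrt{2^n/b}$; we may assume $b\le 2^n$, since otherwise a direct lookup with $O(2^n)\le O(\sqrt{b2^n})$ $T$ gates suffices. Group the $2^n$ output strings $f(x)$ into $2^{n-k}$ blocks indexed by $x_h$. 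Each block holds $2^k$ strings of $b$ bits, i.e.\ $b2^k$ bits in total, and these are placed on a register of $b2^k$ qubits. One register of $b$ qubits is the true output register $\mathcal{C}$; the other $b(2^k-1)$ qubits are dirty ancillae from $\mathcal{D}$.

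First, a SELECT stage: a multiplexer controlled on $x_h$ XORs the whole block $(f(x_h,0),\dots,f(x_h,2^k-1))$ into the $b2^k$-qubit register. Using a unary-iteration tree over the $n-k$ bits, each leaf needs only Clifford CNOTs (fan-out of a fixed classical pattern controlled on one flag qubit). The iteration itself costs $O(2^{n-k})$ Toffoli gates, hence $O(2^{n-k})$ $T$ gates. Second, a SWAP stage: a network of $O(b2^k)$ controlled-SWAPs, controlled on the bits of $x_\ell$, routes the $x_\ell$-th $b$-bit slot onto $\mathcal{C}$. This costs $O(b2^k)$ $T$ gates, since each Fredkin gate is $7$ $T$ gates. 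The total is $O(2^{n-k}+b2^k)=O(\sqrt{b2^n})$ at the chosen $k$.

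Handling dirty ancillae needs the standard toggling trick, which is the main technical point. Because the ancillae start in an unknown state $|z\rangle$, I would apply the sequence SWAP$^\dagger$ $\cdot$ SELECT $\cdot$ SWAP $\cdot$ SELECT (interleaved with Hadamard-free XOR bookkeeping as in \cite{Low2018TradingTG}). The two SELECTs XOR the same data twice into the dirty slots, but the target slot is swapped into $\mathcal{C}$ between them, so only one copy of the data lands on $\mathcal{C}$. Concretely, the sequence $S^{-1}\,\mathrm{SEL}\,S\,\mathrm{SEL}$ maps $|x\rangle|y\rangle|z\rangle$ to $|x\rangle|y\oplus f(x)\rangle|z\rangle$ once $y$ is included as slot $0$ of the swap network. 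I expect verifying this cancellation carefully to be the only nontrivial check. It at most doubles the $T$-count. The unary iteration also needs $O(n)$ work qubits, which can be made dirty-compatible or counted among the $O(\sqrt{b2^n})$ ancillae.

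Everything is built from Toffoli, CNOT and Fredkin gates, each exactly expressible in Clifford+T (Toffoli with $7$ $T$ gates). The implementation is therefore exact, with $O(\sqrt{b2^n})$ $T$ gates and $O(b2^k)=O(\sqrt{b2^n})$ dirty ancillae, as claimed.
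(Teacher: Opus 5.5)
\textbf{Gap: the dirty-ancilla sequence fails.} Your route is the one the paper relies on. The paper gives no proof of its own and simply invokes the SELECT--SWAP construction of \cite{Low2018TradingTG}. Your cost balance $O(2^{n-k}+b2^k)=O(\sqrt{b2^n})$ at $2^k\approx\sqrt{2^n/b}$ is also fine. The problem is the dirty-ancilla step: you call its cancellation the only nontrivial check, but the sequence you commit to does not cancel. Take $k=1$, with slot $0$ equal to $\mathcal{C}$ holding $y$, slot $1$ dirty holding $z$, and $d_j=f(x_h,j)$.
\begin{itemize}
\item If $x_\ell=0$, the swap network is trivial and $S^{-1}\,\mathrm{SEL}\,S\,\mathrm{SEL}$ returns $(y\oplus d_0\oplus d_0,\,z\oplus d_1\oplus d_1)=(y,z)$. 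Nothing is written to $\mathcal{C}$.
\item If $x_\ell=1$, it returns $(y\oplus d_0\oplus d_1,\,z\oplus d_0\oplus d_1)$. The output is wrong and the dirty slot is corrupted.
\end{itemize}
In general, if the network moves slot $q$ to position $\pi(q)$, then slot $q$ (with $z_0=y$) ends in $z_q\oplus d_q\oplus d_{\pi(q)}$. Two things go wrong. The two SELECTs act in different frames, so they do not cancel on the dirty slots. And $\mathcal{C}$ is itself a SELECT target, so for the usual network $\pi(q)=q\oplus x_\ell$ it receives $d_0\oplus d_{x_\ell}$ instead of $d_{x_\ell}$. Reversing the order of the four operators fails in the same way.

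\textbf{The missing idea is the toggle.} Keep $\mathcal{C}$ outside a $b2^k$-qubit dirty register. Read the selected slot into $\mathcal{C}$ twice, once before and once after the data are XORed in. Undo each swap network before the next SELECT, so that both SELECTs act in the same frame. In time order the sequence is:
\begin{enumerate}
\item $S$;
\item $\mathrm{CNOT}^{\otimes b}$ from slot $0$ into $\mathcal{C}$;
\item $S^{-1}$;
\item $\mathrm{SEL}$;
\item $S$;
\item $\mathrm{CNOT}^{\otimes b}$ from slot $0$ into $\mathcal{C}$;
\item $S^{-1}$;
\item $\mathrm{SEL}$.
\end{enumerate}
This gives $\mathcal{C}$ the value $y\oplus z_{x_\ell}\oplus(z_{x_\ell}\oplus d_{x_\ell})=y\oplus f(x)$, and restores slot $j$ to $z_j\oplus d_j\oplus d_j=z_j$ for arbitrary $z$. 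The circuit uses two SELECTs and four swap networks, so your $O(\sqrt{b2^n})$ $T$-count and $O(b2^k)$ dirty-qubit count survive. You need $b2^k$ dirty qubits rather than $b(2^k-1)$.

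\textbf{Minor point: the flag qubits.} The $O(n)$ flag qubits of unary iteration are not evidently dirty-compatible at $O(2^{n-k})$ $T$ cost. The paper counts them as clean: this is the $O(n)$ in its clean-ancilla count stated after Theorem~\ref{thm:su2cnt}. You should count them the same way rather than assert they can be made dirty.
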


By using Lemma \ref{lmm:boolean}, we implement the product of the following block matrices, defined using $s$s, $F$s, $M^\prime$s, and $C^\prime$s from Equation (\ref{eq:my_form}):
\begin{equation} \label{eq:decompose}
    \left( \bigoplus_{i=1}^{2^n} U_i^\prime \right) = 
    \left( \bigoplus_{i=1}^{2^n} \zeta^{s_i} I_2 \right)
    \left( \bigoplus_{i=1}^{2^n} F_i \right)
    \prod_{j=1}^{l-1}\left( \bigoplus_{i=1}^{2^n}M_{j,i}^\prime \right)
    \prod_{j=l}^{m}\left(
    \left( \bigoplus_{i=1}^{2^n} M_{j,i}^\prime \right)
    \left( \bigoplus_{i=1}^{2^n} H_{j,i} \right)
    \right)
    \left( \bigoplus_{i=1}^{2^n} C_i^\prime \right)
    ,
\end{equation}
where $l=\frac{m+\min_i(m_i)}{2}$,
{\red $U_i'$ is defined as} {\SY the second term of the diamond distance function appeared in Equation (\ref{eq:my_form}),}
and
\begin{equation}
    H_{j,i} =
    \begin{cases}
        H & \text{if } \frac{m_i+m}{2}=j , \\
        I_2 & \text{otherwise}. 
    \end{cases}
\end{equation}
From Equation (\ref{eq:decompose}), we can define a Boolean function $g:\{0,1\}^n\rightarrow\{0,1\}^{b}~(b=2m-l+10)$ as follows:
\begin{equation}
    g(i)_j=
    \begin{cases}
        d_{2c+1-j,i} & \text{if } j\leq 2c, \\
        0 & \text{if } 2c <j\leq 2m-2l+2c+2 \text{ and } j \text{ is odd} \text{ and } H_{m+c-(j-1)/2,i} \neq H, \\
        1 & \text{if } 2c <j\leq 2m-2l+2c+2 \text{ and } j \text{ is odd} \text{ and } H_{m+c-(j-1)/2,i} = H, \\
        0 & \text{if } 2c <j\leq 2m-2l+2c+2 \text{ and } j \text{ is even} \text{ and } M_{m+c-(j-2)/2,i}^\prime \neq \zeta TH, \\
        1 & \text{if } 2c <j\leq 2m-2l+2c+2 \text{ and } j \text{ is even} \text{ and } M_{m+c-(j-2)/2,i}^\prime = \zeta TH, \\
        0 & \text{if } 2m-2l+2c+2<j\leq 2m-l+2c+1 \text{ and } M_{2m-l+2c+2-j,i}^\prime \neq \zeta TH, \\
        1 & \text{if } 2m-2l+2c+2<j\leq 2m-l+2c+1 \text{ and } M_{2m-l+2c+2-j,i}^\prime = \zeta TH, \\
        0 & \text{if } j=2m-l+2c+2 \text{ and } F_i = I_2, \\
        1 & \text{if } j=2m-l+2c+2 \text{ and } F_i \neq I_2, \\
        0 & \text{if } j=2m-l+2c+3 \text{ and } F_i \neq SH, \\
        1 & \text{if } j=2m-l+2c+3 \text{ and } F_i = SH, \\
        0 & \text{if } 2m-l+2c+3 < j \text{ and } s_i < j-(2m-l+2c+3), \\
        1 & \text{if } 2m-l+2c+3 < j \text{ and } s_i \geq j-(2m-l+2c+3), \\
    \end{cases}
\end{equation}
where $g(i)_j$ is the $j$-th element of $g(i)$ $(j\in\{1,2,\cdots,b\})$, $c$ is the minimum number of $S$ and $H$ gates to implement all $C_i^\prime$, and $d_{j,i}$ is defined as
\begin{equation}
    C_i^\prime = S^{d_{1,i}}H^{d_{2,i}}S^{d_{3,i}}H^{d_{4,i}}\cdots S^{d_{2c-1,i}}H^{d_{2c,i}}.
\end{equation}

Then, a Boolean layer in the list of subcircuits is defined as $U_g$ whose controlled qubits are in $\mathcal{A}$ and target qubits are in $\mathcal{C}$.
Clifford+T can implement it as a result of Lemma \ref{lmm:boolean} by using $\mathcal{D}$ as dirty ancillae.
After applying a Boolean layer to $\mathcal{AC}$, we apply controlled gates sequentially.
For each $j$, we apply a controlled gate $P_{j}\oplus Q_j$, controlled by the $j$-th qubit of $\mathcal{C}$ and targeting the qubit in $\mathcal{B}$.
$P_j$ and $Q_j$ are determined as follows:
\begin{equation} \label{eq:pq}
    (P_j,~Q_j)=
    \begin{cases}
        (I_2,~ H) & \text{if } j \leq 2c \text{ and } j \text{ is odd}, \\
        (I_2,~ S) & \text{if } j \leq 2c \text{ and } j \text{ is even}, \\
        (I_2,~ H) & \text{if } 2c < j \leq 2m-2l+2c+2 \text{ and } j \text{ is odd}, \\
        (TSH,~ \zeta TH) & \text{if } 2c < j \leq 2m-2l+2c+2 \text{ and } j \text{ is even}, \\
        (TSH,~ \zeta TH) & \text{if } 2m-2l+2c+2 < j \leq 2m-l+2c+1, \\
        (I_2,~ H) & \text{if } j=2m-l+2c+2, \\
        (I_2,~ S) & \text{if } j=2m-l+2c+3, \\
        (I_2,~ \zeta I_2) & \text{if } 2m-l+2c+3 < j.
    \end{cases}
\end{equation}
Finally, applying the inverse Boolean layer restores a product structure of the global state.
In particular, the register $\mathcal{C}$ returns to the state $|0\rangle$ and becomes factorized from $\mathcal{AB}$.
It finishes the implementation of $U_1^\prime\oplus U_2^\prime\oplus\cdots\oplus U_{2^n}^\prime$, where it controls the qubit in $\mathcal{B}$ from the qubits in $\mathcal{A}$.

Eventually, the gate represented as Equation (\ref{eq:decompose}) is implemented.
Since the T-count is an even number for the approximate syntheses of $U_1,U_2,\cdots,U_{2^n}$ and $s_i$ was chosen to minimize the Frobenius norm, the eigenvalues for each $U_j^\dagger U_j^\prime$ can be written as $e^{i\theta_j}$ and $e^{-i\theta_j}$, where
\begin{equation}
    -\frac{\pi}{2}\leq\theta_j\leq\frac{\pi}{2} ~~ \mathrm{and} ~~
    |\sin\theta_j| < \varepsilon.
\end{equation}
Thus,
\begin{equation}
    D_\diamond(U_1\oplus U_2\oplus\cdots\oplus U_{2^n},~U_1^\prime\oplus U_2^\prime\oplus\cdots\oplus U_{2^n}^\prime) = \max_{j\in\{1,2,\cdots,2^n\}}D_\diamond(U_j,~{\red U}_j^\prime) < \varepsilon.
\end{equation}

\subsection{T-count analysis}
In this section, we will prove Theorem \ref{thm:su2cnt}.
\begin{proof} [Proof of Theorem \ref{thm:su2cnt}.]
    First, we count the number of $T$ gates and ancillae in Section \ref{ssec:construction} and show that they match the values in Equation (\ref{eq:thm1}).
    To estimate $m$ in Section \ref{ssec:construction}, we will show the following lemma:
    \begin{lmm} [Bound on T-count of SU(2) Syntheses] \label{lmm:bound}
        Let $\mathcal{T}_\varepsilon (U)$ for $U\in\mathrm{SU(2)}$ be the minimum even-number T-count to approximate $U$ within an error $\varepsilon$ by a Clifford+T circuit.
        Then, for $U_1,U_2,\cdots,U_k\in\mathrm{SU}(2)$, the following two equations hold
        \begin{equation} \label{eq:count_max}
            \max_{i\in\{1,2,\cdots,k\}} \mathcal{T}_\varepsilon (U_i)=
            3\log_2(1/\varepsilon) + O(\log (k)) + o(\log(1/\varepsilon)){\red \quad \mathrm{w.h.p.}},
        \end{equation}
        and
        \begin{equation} \label{eq:count_min}
            \min_{i\in\{1,2,\cdots,k\}} \mathcal{T}_\varepsilon (U_i)=
            3\log_2(1/\varepsilon) - O(\log (k)) {\red -} o(\log(1/\varepsilon))
            \quad \mathrm{w.h.p.}
        \end{equation}
    \end{lmm}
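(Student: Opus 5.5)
The plan is a counting argument on Matsumoto--Amano normal forms, run in both directions, combined with a union bound over the $k$ targets. I take $U_1,\dots,U_k$ i.i.d. Haar-random in $\mathrm{SU}(2)$, which is the probability space behind ``w.h.p.''. The argument rests on two facts.

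\textbf{Counting fact.} By uniqueness of the normal form in Equation (\ref{eq:manormal}), the number of distinct $\mathrm{SU}(2)$ elements (up to phase) with T-count at most $t$ is $N(t)=\Theta(2^{t})$. More precisely it is $192\cdot(3\cdot 2^{t}-2)$ or similar: the prefix $F$ gives 3 choices, each $M_i$ gives 2, and $C$ gives 24.

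\textbf{Volume fact.} A diamond-distance ball of radius $\varepsilon$ in $\mathrm{PSU}(2)\cong \mathrm{SO}(3)$ has Haar measure $\Theta(\varepsilon^3)$.

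\textbf{Lower bound, Equation (\ref{eq:count_min}).} Fix $t=3\log_2(1/\varepsilon)-a\log_2 k-g(\varepsilon)$, where $g(\varepsilon)=o(\log(1/\varepsilon))$ is any function tending to infinity, for instance $\sqrt{\log(1/\varepsilon)}$. The union of $\varepsilon$-balls around all circuits of T-count at most $t$ has measure at most $N(t)\cdot O(\varepsilon^3)=O(2^{t}\varepsilon^3)=O(k^{-a}2^{-g(\varepsilon)})$. A single Haar $U_i$ therefore satisfies $\mathcal{T}_\varepsilon(U_i)\le t$ with at most this probability. A union bound over the $k$ targets gives failure probability $O(k^{1-a}2^{-g})$. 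Choosing $a=2$ and $g$ so that $2^{-g}$ is at least polynomially small would make this $O(k^{-1}\varepsilon^{c})$. That would require $g=\Theta(\log(1/\varepsilon))$, which is not $o(\log(1/\varepsilon))$.

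I expect this to be the main subtlety: reconciling the w.h.p. definition, $1-O(n^{-d}\varepsilon^{c})$, with an $o(\log(1/\varepsilon))$ slack. The resolution I intend is to read the $O(\varepsilon^{c})$ requirement as allowing $c$ small and to absorb the needed slack $c\log_2(1/\varepsilon)$ into the error terms. If the target is the strict $o(\cdot)$ form, I would instead use $2^{-g}$ with $g=\log^{1-\delta}$ and accept a superpolylogarithmic but subpolynomial failure probability. I will state whichever reading the $o$-term permits, most likely letting the subleading term be $o$ uniformly as $c\to 0$.

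\textbf{Upper bound, Equation (\ref{eq:count_max}).} Here I cite the known single-qubit result \cite{Morisaki:2025mwy}: for a Haar-random $U$, $\mathcal{T}_\varepsilon(U)\le 3\log_2(1/\varepsilon)+o(\log(1/\varepsilon))$ except with probability $O(\varepsilon^{c})$. If needed I would re-derive this via covering. The circuits of T-count exactly $t$ are equidistributed in $\mathrm{SO}(3)$ by spectral-gap arguments for the Clifford+T group, and the probability that a random $U$ misses all of them decays with $2^{t}\varepsilon^3$. To control the maximum I use the same union bound: replacing $\varepsilon^{c}$-type tail probabilities by $\varepsilon^{c}/k^{1+d}$ costs an extra additive $O(\log k)$ in $t$. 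This works because the tail bound behaves like $\Pr[\mathcal{T}_\varepsilon(U)>3\log_2(1/\varepsilon)+s+o(\cdot)]\le C2^{-\alpha s}$ for some $\alpha>0$, which is where the $O(\log k)$ comes from.

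The step I expect to be the real obstacle is extracting this exponential tail in $s$ from the cited single-qubit theorem, since it is stated only for fixed failure probability. I would first check whether its proof, a covering argument via expander mixing, yields the tail directly by rescaling $\varepsilon$. The parity restriction to an even T-count adds at most $1$, which is absorbed into $O(1)$ (Appendix \ref{sec:even_t}).
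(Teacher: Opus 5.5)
Your overall architecture matches the paper's: single-target tail bounds for $\mathcal{T}_\varepsilon$, a union bound over the $k$ targets, and a reading of ``w.h.p.'' in which the exponent of $\varepsilon$ depends on the slack $\gamma$. The paper proves everything ``for any $\gamma>0$'' with failure probability $O(\varepsilon^{\gamma})$, which is exactly your resolution of the $o(\log(1/\varepsilon))$ tension. Your lower tail for the minimum is correct: $N(t)=O(2^t)$ normal forms, times balls of measure $O(\varepsilon^3)$, then a union bound. It also makes explicit the extra factor $k^{-\iota}$ at the shifted level. The paper needs that factor but only quotes $\Pr[\mathcal{T}_\varepsilon(U)\le(3-\gamma)\log_2(1/\varepsilon)]=O(\varepsilon^\gamma)$.

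The upper bound on the maximum is not closed, and one of your justifications is wrong.

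\textbf{Exponential tail.} The paper gets the tail you flagged as the obstacle from the explicit covering bound in Appendix~\ref{sec:even_t} (Parzanchevski--Sarnak plus Morisaki et al.): $\Pr[\mathcal{T}_\varepsilon(U)>2t]=O\bigl(t^2/(2^{2t}\varepsilon^3)\bigr)$. At level $(3+\gamma)\log_2(1/\varepsilon)+\iota\log_2 k$ this is $O(k^{-\iota}\varepsilon^{\gamma})$ up to polylogarithmic factors. The union bound therefore costs only $\iota\log_2 k$, for any $\iota>1$. You did not need to open the proof of the single-qubit result. From a black-box bound $\Pr[\mathcal{T}_\varepsilon(U)>(3+\gamma)\log_2(1/\varepsilon)]\le C\varepsilon^{c}$, use the monotonicity $\mathcal{T}_\varepsilon\le\mathcal{T}_{\varepsilon'}$ at $\varepsilon'=\varepsilon k^{-\beta}$ with $\beta c>1$. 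This already gives an $O(\log k)$ shift, only with a worse, $\gamma$-dependent constant $(3+\gamma)\beta$.

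\textbf{Parity.} Your claim that restricting to even T-count ``adds at most 1'' is false. The parity of the T-count is an invariant of the operator up to a phase in $\langle\zeta\rangle$ (compare determinants). Appending a single $T$ to an odd-count $\varepsilon$-approximant moves it a constant distance, about $\sin(\pi/8)$, away from $U$. So no deterministic conversion exists, and the unrestricted single-qubit theorem says nothing about the parity of its approximant. The lemma concerns the even-restricted count, so you must run the covering argument directly on even-count operators. These are the quadruples in $(\mathbb{Z}[\sqrt2])^4$ with $a^2+b^2+c^2+d^2=2^t$. That is exactly what Appendix~\ref{sec:even_t} does, and it is what produces the bound above.

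\textbf{Reverse directions.} The lemma asserts two-sided equalities, so you also need $\max_i\mathcal{T}_\varepsilon(U_i)\ge(3-\gamma)\log_2(1/\varepsilon)$ and $\min_i\mathcal{T}_\varepsilon(U_i)\le(3+\gamma)\log_2(1/\varepsilon)$ w.h.p. Both follow at once from $\min_i\mathcal{T}_\varepsilon(U_i)\le\mathcal{T}_\varepsilon(U_1)\le\max_i\mathcal{T}_\varepsilon(U_i)$ and the single-target bounds; the paper instead uses independence, $\Pr[\cdot]^k$. Your proposal omits them, and you should state them.
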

    \begin{proof} [Proof of Lemma \ref{lmm:bound}.]
        For any $\gamma>0$,
        \begin{equation} \label{eq:count_upper}
            \Pr[\mathcal{T}_\varepsilon(U) > (3+\gamma) \log_2(1/\varepsilon)] = O(\varepsilon^\gamma\mathrm{polylog}(1/\varepsilon)),
        \end{equation}
        and
        \begin{equation} \label{eq:count_lower}
            \Pr[\mathcal{T}_\varepsilon(U) \leq (3-\gamma) \log_2(1/\varepsilon)] {\red =} O(\varepsilon^\gamma),
        \end{equation}
        hold (see details in Appendix \ref{sec:even_t} for the proof.).
        By Equation (\ref{eq:count_upper}), for any $\gamma>0$ and {\red any} $\iota>1$,
        \begin{eqnarray}
            \notag
            \Pr\left[\max_{i\in\{1,2,\cdots,k\}}\mathcal{T}_\varepsilon(U_i) > (3+\gamma)\log_2(1/\varepsilon)+\iota\log_2(k)\right] &\leq& 
            \label{eq:count_minmax_1}
            k\Pr\left[\mathcal{T}_\varepsilon(U) > (3+\gamma)\log_2(1/\varepsilon)+\iota\log_2(k)\right] \\
            &=& O(k^{-(\iota-1)} \varepsilon^\gamma\mathrm{polylog}(1/\varepsilon)),
        \end{eqnarray}
        and
        \begin{eqnarray}
            \notag
            \Pr\left[\min_{i\in\{1,2,\cdots,k\}}\mathcal{T}_\varepsilon(U_i) > (3+\gamma)\log_2(1/\varepsilon)\right] &=& 
            \label{eq:count_minmax_2}
            \Pr\left[\mathcal{T}_\varepsilon(U) > (3+\gamma)\log_2(1/\varepsilon)\right]^k \\
            &\leq& O(k^{-(\iota-1)}\varepsilon^\gamma\mathrm{polylog}(1/\varepsilon)).
        \end{eqnarray}
        By Equation (\ref{eq:count_lower}), for any $\gamma>0$ and {\red any} $\iota>1$
        \begin{eqnarray}
            \notag
            \Pr\left[\min_{i\in\{1,2,\cdots,k\}}\mathcal{T}_\varepsilon(U_i) \leq (3+\gamma)\log_2(1/\varepsilon)-\iota\log(k)\right] &\leq& 
            \label{eq:count_minmax_3}
            k\Pr\left[\mathcal{T}_\varepsilon(U) < (3+\gamma)\log_2(1/\varepsilon)-\iota\log(k)\right] \\
            &\le& O(k^{-(\iota-1)}\varepsilon^\gamma),
        \end{eqnarray}
        and
        \begin{eqnarray}
        \notag
            \Pr\left[\max_{i\in\{1,2,\cdots,k\}}\mathcal{T}_\varepsilon(U_i) \leq (3-\gamma)\log_2(1/\varepsilon)\right] &\leq&
            \label{eq:count_minmax_4}
            \Pr[\mathcal{T}_\varepsilon(U) \leq (3-\gamma) \log_2(1/\varepsilon)]^k \\
            &\le& O(k^{-(\iota-1)}\varepsilon^\gamma).
        \end{eqnarray}
        Combining Equations (\ref{eq:count_minmax_1}) and (\ref{eq:count_minmax_4}) yields Equation (\ref{eq:count_max}).
        Combining Equations (\ref{eq:count_minmax_2}) and (\ref{eq:count_minmax_3}) yields Equation (\ref{eq:count_min}).
    \end{proof}
    Lemma \ref{lmm:bound} implies that
    \begin{equation}
        m=3\log_2(1/\varepsilon)+O(\log(k))+o(\log(1/\varepsilon))
        \quad \mathrm{w.h.p.},
    \end{equation}
    and
    \begin{equation}
        m-l = O(\log(k))+o(\log(1/\varepsilon))
        \quad \mathrm{w.h.p.},
    \end{equation}
    where $k$ is the number of types of SU(2) appearing in $U_1,U_2,\cdots,U_{2^n}$.
    Furthermore, the following equations hold:
    \begin{eqnarray}
        \notag
        TSH \oplus \zeta TH &=& STH \oplus SXT XH \\
        \label{eq:tshth}
        &=& (I_2\otimes S) \mathrm{CNOT} (I_2\otimes T) \mathrm{CNOT} (I_2\otimes H),
    \end{eqnarray}
    and
    \begin{equation} \label{eq:iiti}
        I_2\oplus(\zeta I_2) = T\otimes I_2.
    \end{equation}
    With these in hand,
    all $P_j\oplus Q_j$ in Equation (\ref{eq:pq}) can be implemented by
    \begin{equation} \label{eq:tcount_control}
        m + 2(m-l) + 5c + 14 = 3\log_2(1/\varepsilon) + O(\log(k)) + o(\log(1/\varepsilon)) \quad \mathrm{w.h.p.}
    \end{equation}
    $T$ gates and no ancilla.
    Since $m$ can also be written as $O(\log(1/\varepsilon))$ {\red for any target SU(2)} \cite{10.5555/2685188.2685198},
    Lemma \ref{lmm:boolean} assures that $O\left(\sqrt{2^n\log(1/\varepsilon)}\right)$ $T$ gates and ancillae are enough to implement the Boolean layer and its inverse by a Clifford+T gate set.
    The whole algorithm uses the following number of $T$ gates and ancillae matching the value in Equation (\ref{eq:thm1}):
    \begin{equation}
        3\log_2(1/\varepsilon) + O\left(\sqrt{2^n\log(1/\varepsilon)}\right) + o(\log(1/\varepsilon)) \quad \mathrm{w.h.p.}
    \end{equation}

    Second, we calculate the lower bound of the T-count from discussions using channel representation and show that it matches Equation (\ref{eq:thm1}) when $n$ is fixed.
    Since
    \begin{equation}
        \left(\sum_{\sigma\in S_n}\sigma_{i,i}\sigma\right)\otimes I_2 = 
        \left(\bigoplus_{j=1}^{i-1}
        \begin{pmatrix}
            0&0\\
            0&0
        \end{pmatrix}
        \right) \oplus
        (2^n I_2) \oplus \left(\bigoplus_{j=i+1}^{2^n}
        \begin{pmatrix}
            0&0\\
            0&0
        \end{pmatrix}
        \right)
        \qquad\left(i\in \{1,2,\cdots,2^n\},~
        S_n=\left\{\bigotimes_{j=1}^n \sigma_j \middle|~ \sigma_j\in\{I_2,~Z\} \right\}\right),
    \end{equation}
    the following equation holds for a Clifford+T circuit $V_1\oplus V_2\oplus \cdots\oplus V_{2^n}$:
    \begin{equation}
        \left(\widehat{V_i}\right)_{a,b}=
        \sum_{k\in K} (P_k)_{i,i} \left(\widehat{\bigoplus_{l=1}^{2^n} V_l}\right)_{4k+a,b},
    \end{equation}
    where $K$ is a set of integers from 0 to $4^n-1$ whose base-4 representation contains only digits 0 and 3 and $P_k$ here is what is defined at Equation (\ref{eq:pauli_time}).
    The channel representation of Clifford gates only has 0 or $\pm 1$ for their matrix elements from their definition and
    \begin{equation}
        \widehat{T}=
        \begin{pmatrix}
            1&0&0&0 \\
            0&\frac{1}{\sqrt{2}}&\frac{1}{\sqrt{2}}&0 \\
            0&-\frac{1}{\sqrt{2}}&\frac{1}{\sqrt{2}}&0 \\
            0&0&0&1
        \end{pmatrix},
    \end{equation}
    so the smallest denominator exponent (sde) defined below becomes the lower bound for the T-count of a Clifford+T circuit $V\in\mathrm{SU(2^{n})}$:
    \begin{equation}
        sde(\widehat{V}) = \max_{i,j} \left(\arg\min_{k\geq 0} \left(\sqrt{2}^k\left(\widehat{V}\right)_{i,j}\in\mathbb{Z}[\sqrt{2}]\right)\right).
    \end{equation}
    (For a single-qubit case, this lower bound coincides with the minimum T-count \cite{10.5555/2685179.2685180}.)
    {\red Let $\bigoplus_{l=1}^{2^n}V_l$ be a Clifford+T circuit and satisfy $D_\diamond(\tilde{V},\bigoplus_{l=1}^{2^n}V_l)\leq\epsilon$ for a target unitary $\tilde{V}=\bigoplus_{l=1}^{2^n}\tilde{V}_l$.}
    Thus, the following inequality holds:
    \begin{eqnarray}
        \notag \mathcal{T}_{{\red 0}}\left(\bigoplus_{l=1}^{2^n}V_l\right) &\geq& sde\left(\widehat{\bigoplus_{l=1}^{2^n}V_l}\right) \\
        &\geq& \max_l sde\left(\widehat{V_l}\right) \\
        &=& \max_l\mathcal{T}_0(V_l).
    \end{eqnarray}
    {\red Since $\epsilon\geq D_\diamond(\tilde{V},\bigoplus_{l=1}^{2^n}V_l)=\max_lD_\diamond(\tilde{V}_l,V_l)$}, we can safely say that at least
    \begin{equation} \label{eq:tcount_lowerbound}
        3\log_2(1/\varepsilon){\red -}o(\log(1/\varepsilon))
        \quad \mathrm{w.h.p.}
    \end{equation}
    T-counts are required for an $\varepsilon$-approximation of $\mathrm{SU(2)}^{\oplus 2^n}$ when the form of the Clifford+T circuit for the approximation is restricted to a block-diagonal form $\mathrm{U(2)}^{\oplus 2^n}$.
    This value matches Equation (\ref{eq:thm1}) when $n$ is fixed.
\end{proof}

\subsection{Ancilla-free controlled gate synthesis}
In Section \ref{ssec:construction}, controlled gates $P_j\oplus Q_j$ in Equation (\ref{eq:pq}) take the qubits in $\mathcal{C}$ as control qubits and the qubit in $\mathcal{B}$ as a target qubit.
However, if we limit ourselves to the 2-qubit case (i.e., $n=1$), we can construct an equivalent Clifford+T circuit without ancilla.
\begin{rthm2} [$\mathrm{SU}(2)\oplus\mathrm{SU(2)}$ ancilla-free controlled gate synthesis]
    Controlled gates whose diagonal blocks are composed of SU(2) can be implemented up to error $\varepsilon$ by a Clifford+T circuit using
    \begin{equation} \tag{\ref{eq:thm2}}
        3\log_2(1/\varepsilon) + o\left(\log(1/\varepsilon)\right)
        \quad \mathrm{w.h.p.}
    \end{equation}
    T gates and without ancilla.
    
    Furthermore, no Clifford+T circuit with a diagonal block form can use asymptotically fewer T gates.
\end{rthm2}
\begin{proof} [Proof of Theorem \ref{thm:su2su2}.]
    By directly using the qubit in $\mathcal{A}$ as the control qubit instead of controlling via qubits in $\mathcal{C}$, the need for the Boolean layers is eliminated, allowing an ancilla-free Clifford+T circuit to be constructed for the approximation.
    The T-count for it is the same as Equation (\ref{eq:tcount_control}) with $k\leq2$.
    This T-count coincides with Equation (\ref{eq:thm2}) and the lower bound provided in the previous section (Equation (\ref{eq:tcount_lowerbound})).
\end{proof}


\section{Applications}
In this section, we present several applications of Theorems \ref{thm:su2cnt} and \ref{thm:su2su2}.
First, Theorem \ref{thm:su2cnt} can be generalized to cover the synthesis of unitaries from $\mathrm{SU(2)}^{\oplus 2^n}$ to $\mathrm{U(2)}^{\oplus 2^n}$.
\begin{rthm3} [Controlled gate synthesis]
    For $n\geq1$, $n$-qubit controlled gates can be implemented up to error $\varepsilon$ by a Clifford+T circuit using
    \begin{equation} \tag{\ref{eq:thm3}}
        6\log_2(1/\varepsilon) + O\left(\sqrt{2^n\log(1/\varepsilon)}\right) + o\left(\log(1/\varepsilon)\right)
        \quad \mathrm{w.h.p.}
    \end{equation}
    T gates and ancillae.
\end{rthm3}
\begin{proof} [Proof of Theorem \ref{thm:general_control}]
    Let $U_1\oplus U_2\oplus\cdots\oplus U_{2^n}$ be an arbitrary element of $\mathrm{U(2)}^{\oplus 2^n}$.
    It can be decomposed into two matrices
    \begin{equation}
        \bigoplus_{j=1}^{2^n}U_j = \left(\bigoplus_{j=1}^{2^n}V_j\right) \left(\bigoplus_{j=1}^{2^n}e^{i\theta_j}I_2\right)
        \qquad (V_j\in\mathrm{SU(2)},~ \theta_j\in[0,2\pi)).
    \end{equation}
    The first gate can be approximately synthesized within an error of $\varepsilon/2$ as a consequence of Theorem \ref{thm:su2cnt}.
    The second gate can be approximately implemented by adding one clean ancilla initialized to $|0\rangle$.
    When $D=\mathrm{diag}(e^{i\theta_1},~ e^{i\theta_2},~ \cdots,~ e^{i\theta_{2^n}})$, $D\oplus D^\dagger$ becomes $\bigoplus_{j=1}^{2^n}\begin{pmatrix}
        e^{i\theta_j}&0 \\ 0&e^{-i\theta_j}
    \end{pmatrix} \in\mathrm{SU(2)}^{\oplus 2^n}$
    by conjugating by swap gates, so it can be approximated by a Clifford+T circuit $E$ with an error within $\varepsilon/2$, according to Theorem \ref{thm:su2cnt}:
    \begin{equation}
        D_\diamond(\mathcal{E}_{D\oplus D^\dagger},~ \mathcal{E}_E)<\frac{\varepsilon}{2}.
    \end{equation}
    Let $\mathcal{D}$ be an $n$-qubit state to $n$-qubit state CPTP mapping consisting of ancilla addition, $E$, and ancilla trace-out. Then $\mathcal{D}$ will have a diamond distance of less than $\varepsilon/2$ from $\mathcal{E}_D$ \cite{Watrous_2018}.
    Combining these two yields the CPTP mapping approximating the target gate $U_1\oplus U_2\oplus\cdots\oplus U_{2^n}$.
    Since we used Theorem \ref{thm:su2cnt} twice for $SU(2)^{\oplus 2^n}$ and no $T$ gate other than that, the total number of $T$ gates and ancillae becomes the one in Equation (\ref{eq:thm3}).
\end{proof}

Second, $\mathrm{SU(2)}^{\oplus 2^n}$ can be synthesized without ancilla using the proof of Theorem \ref{thm:su2su2}.
\begin{rthm4} [$\mathrm{SU}(2)^{\oplus 2^n}$ ancilla-free controlled gate synthesis] 
    For $n\geq2$, $n$-qubit controlled gates whose diagonal blocks are composed of SU(2) can be implemented up to error $\varepsilon$ by a Clifford+T circuit using
    \begin{equation} \tag{\ref{eq:thm4}}
        \frac{3(2^n+1)}{2}\log_2(1/\varepsilon) + O(n2^n) + o\left(2^n\log(1/\varepsilon)\right)
        \quad \mathrm{w.h.p.}
    \end{equation}
    T gates and without ancilla.
\end{rthm4}
\begin{proof} [Proof of Theorem \ref{thm:su2noancilla}]
    \begin{lmm} [SU(2) synthesis subrouting by $\mathrm{SU(2)}\oplus \mathrm{SU(2)}$] \label{lmm:su2su2su2}
        Given $U\in\mathrm{SU(2)}$, an ancilla-free Clifford+T circuit with
        \begin{equation}
            1.5\log_2(1/\varepsilon)+o(\log(1/\varepsilon))
            \quad \mathrm{w.h.p.}
        \end{equation}
        T-count that has the form
        \begin{equation}
            U_1\oplus U_2\in \mathrm{SU(2)}\oplus\mathrm{SU(2)}
            ~~ s.t. ~~
            D_\diamond(U_1U_2^\dagger,~ U)<\varepsilon
        \end{equation}
        can be constructed.
    \end{lmm}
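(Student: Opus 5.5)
The plan is to split the target $U$ into two halves and absorb one of them into each block of an $\mathrm{SU(2)}\oplus\mathrm{SU(2)}$ gate, which is then built with the ancilla-free construction of Theorem \ref{thm:su2su2}. Concretely, choose $W\in\mathrm{SU(2)}$ to be a \emph{random} element, drawn independently of $U$ and Haar distributed. Set the exact targets $\tilde U_1 = U W$ and $\tilde U_2 = W$, so that $\tilde U_1\tilde U_2^\dagger = U$. Both $\tilde U_1$ and $\tilde U_2$ are Haar distributed, so the probabilistic T-count statements of Lemma \ref{lmm:bound} apply to them.

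Next, synthesize each block with precision $\varepsilon/2$ using the even-T-count normal form of Equation (\ref{eq:my_form}). Then build $U_1\oplus U_2$ ancilla-free as in the proof of Theorem \ref{thm:su2su2}, using the control qubit directly. The error satisfies $D_\diamond(U_1U_2^\dagger,U)\le D_\diamond(U_1,\tilde U_1)+D_\diamond(U_2,\tilde U_2)<\varepsilon$. This costs about $3\log_2(2/\varepsilon)$ T gates, which is \emph{not} the claimed $1.5\log_2(1/\varepsilon)$. A naive split therefore fails, and the essential point must be that the shared parts of the two normal forms cancel in $U_1U_2^\dagger$.

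The improved plan is as follows. Write the desired length-$m\approx 3\log_2(1/\varepsilon)$ normal form $F\prod_{j=1}^{m}M_j C$ approximating $U$. Split it as $A=F\prod_{j\le m/2}M_j$ and $B=\prod_{j>m/2}M_jC$, so that $U\approx AB$. Then choose $U_1=A\,R$ and $U_2=B^{\dagger}{}^{-1}$-style, i.e.\ $U_2=R'$ with $U_1U_2^\dagger=AB$. The realization in circuit form is $U_1 = A$-part and $U_2=(B^\dagger)$-part, interleaved so that each layer $j$ is a block gate $M\oplus M'$ of the type $TSH\oplus\zeta TH$ or $I_2\oplus H$ from Equations (\ref{eq:tshth}) and (\ref{eq:iiti}). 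Each of these costs one T gate while advancing \emph{both} blocks by one normal-form syllable. Concretely, pad $A$ and $B^\dagger$ to equal length $m/2$ using the identities for $(TH)^\dagger$, $(THTH)^\dagger$, $(THTHTH)^\dagger$ preceding Equation (\ref{eq:decompose}). Their syllables can then be put in the form $M'_{j}\oplus N'_{j}$ with $M'_j,N'_j\in\{\zeta TH,TSH\}$, possibly with $H$ interleavings. Each layer $M'_j\oplus N'_j$ is one of $TSH\oplus TSH=I_2\otimes TSH$, $\zeta TH\oplus\zeta TH$, or $TSH\oplus\zeta TH$ (one CNOT-conjugated $T$). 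Each costs exactly one T gate, for a total of $m/2+O(1)=1.5\log_2(1/\varepsilon)+o(\log(1/\varepsilon))$ w.h.p. by Equation (\ref{eq:count_upper}) and Lemma \ref{lmm:bound} with $k=1$. The Clifford prefix and suffix pieces ($F$, $C$, phases $\zeta^{s}$) are handled by the constant-cost controlled $H$, $S$, and $T\otimes I_2$ gates exactly as in Equation (\ref{eq:tcount_control}). Only the global relative phase matters, because the diamond distance ignores global phase.

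The main obstacle is the middle step. It requires ensuring that the dagger of the second half, $B^\dagger=C^\dagger\prod_{j>m/2}^{\leftarrow}M_j^\dagger$, can be rewritten as a product of syllables from $\{\zeta TH, TSH\}$ of the same length $m/2$, up to Cliffords at the ends. It also requires that the two halves be aligned syllable by syllable so that each layer is a single-T block gate. I would first try to use the three dagger identities to convert $M_j^\dagger$ syllable-wise, carrying the trailing Clifford ($HZ$, $HX$, $H$) through to the end of the circuit. I would then check that the leftover Cliffords collapse into the $O(1)$-cost boundary layers, and that the parity and evenness conditions from Appendix \ref{sec:even_t} keep the lengths equal.
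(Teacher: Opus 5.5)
Your proposal is correct and follows essentially the paper's route. Take an even-T-count normal-form approximation of $U$ with T-count $m\approx 3\log_2(1/\varepsilon)$ and cut it into two halves of T-count $m/2$. Let $U_1$ be the first half and $U_2$ the dagger of the second, then build $U_1\oplus U_2$ with the ancilla-free layer construction of Theorem \ref{thm:su2su2} at $m/2+O(1)$ T gates.

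The obstacle you flag has a simpler fix than syllable-wise dagger identities. The second half $\prod_{j>m/2}M_jC$ is itself a Matsumoto--Amano normal form with $F=I_2$, hence T-optimal. Its inverse therefore has a normal form with exactly $m/2$ syllables, so the two blocks align without any padding.
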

    \begin{proof} [Proof of Lemma \ref{lmm:su2su2su2}]
        There exists a Clifford+T gate sequence that approximates $U$ with an even-number T-count and a $3\log_2(1/\varepsilon)+o(\log(1/\varepsilon))$ w.h.p. T-count.
        This sequence is divided into two halves, one denoted as $U_1^\prime$ and the other as $U_2^{\prime\dagger}$, such that the T-count is halved.
        When the circuit construction method from the proof of Theorem \ref{thm:su2su2} is applied, $U_1^\prime\oplus U_2^\prime$ can be constructed as an ancilla-free circuit with a T-count of $1.5\log_2(1/\varepsilon)+o(\log(1/\varepsilon))$ w.h.p.
    \end{proof}
    By using Lemma \ref{lmm:su2su2su2}. $(2^n-1)$ times and one SU(2) synthesis for an error $\varepsilon/2^n$, an ancilla-free Clifford+T circuit can be constructed that approximates $\mathrm{SU(2)}^{\oplus 2^n}$ within an error $\varepsilon$ \cite{Yamazaki2025CliffordVSF}.
    Considering Lemma \ref{lmm:bound}, the total T-count is
    \begin{equation}
        ((2^n-1)\left(1.5\log_2(2^n/\varepsilon) + O(n) + o(\log(1/\varepsilon))\right))
        + (3\log_2(1/\varepsilon) + o(\log(1/\varepsilon)))
        \quad \mathrm{w.h.p.}
    \end{equation}
    This matches Equation (\ref{eq:thm4}).
\end{proof}

Third, SU(4) can be approximately synthesized by using Theorem \ref{thm:su2su2} and Lemma \ref{lmm:su2su2su2}.
\begin{rthm5} [SU(4) gate synthesis]
    SU(4) gates can be implemented up to error $\varepsilon$ by a Clifford+T circuit using
    \begin{equation} \tag{\ref{eq:thm5}}
        9\log_2(1/\varepsilon)+o(\log(1/\varepsilon))
        \quad \mathrm{w.h.p.}
    \end{equation}
    T gates and without ancilla.
\end{rthm5}
\begin{proof} [Proof of Theorem \ref{thm:su4synthesis}]
    $\mathrm{SU(4)}/\mathbb{Z}_2$ is known to be isomorphic to SO(6).
    If the basis is chosen appropriately, the $4^2-1=15$ Pauli-direction rotations of SU(4) correspond to $\binom{6}{2}=15$ rotations in planes parallel to two axes.
    The correspondence is listed at Appendix \ref{sec:su4so6}.
    
    Let $U$ be a member of SU(4).
    First, we decompose $U$ into the following six matrices:
    \begin{equation}
        U = e^{i\theta_1 X\otimes I_2}e^{i\theta_2 Y\otimes I_2}e^{i\theta_3 Z\otimes X}e^{i\theta_4 Z\otimes Y}e^{i\theta_5 Z\otimes Z} U_1
        \qquad (\text{(SO(6) representation of $U_1$) $\in\mathrm{SO(5)}\oplus\{I_1\}$}).
    \end{equation}
    By Lemma \ref{lmm:su2su2su2}, we can construct a Clifford+T circuit $V_1\oplus V_2$ with T-count $1.5\log_2(1/\varepsilon)+o(1/\varepsilon)$ w.h.p. that satisfies $D_\diamond(V_1V_2^\dagger,~e^{i\theta_3 X}e^{i\theta_4 Y}e^{2i\theta_5 Z}e^{i\theta_4 Y}e^{i\theta_3 X})<\varepsilon/4$.
    
    Second, we decompose $\left(I_2\otimes\left(V_1^\dagger e^{i\theta_3 X}e^{i\theta_4 Y}e^{i\theta_5 Z}\right)\right) U_1$ into the following five matrices:
    \begin{equation}
        \left(I_2\otimes\left(V_1^\dagger e^{i\theta_3 X}e^{i\theta_4 Y}e^{i\theta_5 Z}\right)\right) U_1 =
        e^{i\theta_6 Z\otimes I_2}e^{i\theta_7 Y\otimes X}e^{i\theta_8 Y\otimes Y}e^{i\theta_9 Y\otimes Z} U_2
        \qquad (\text{(SO(6) representation of $U_2$)$\in\mathrm{SO(4)}\oplus\{I_2\}$}).
    \end{equation}
    The single-qubit gate synthesis allows us to synthesize a T-count $3\log_2(1/\varepsilon)+o(\log(1/\varepsilon))$ w.h.p. Clifford+T circuit $V_3$ to approximate $e^{i\theta_1 X}e^{i\theta_2 Y}e^{i\theta_6 Z}$ within an error of $\varepsilon/4$.
    By Lemma \ref{lmm:su2su2su2}, we can also construct a Clifford+T circuit $V_4\oplus V_5$ with T-count $1.5\log_2(1/\varepsilon)+o(1/\varepsilon)$  w.h.p. that satisfies \\ $D_\diamond(V_4V_5^\dagger,e^{i\theta_7 X}e^{i\theta_8 Y}e^{2i\theta_9 Z}e^{i\theta_8 Y}e^{i\theta_7 X})<\varepsilon/4$.
    
    Third, since $\left(I_2\otimes\left(V_4^\dagger e^{i\theta_7 X}e^{i\theta_8 Y}e^{i\theta_9 Z}\right)\right) (H\otimes I_2) U_2 (H\otimes I_2) \in \mathrm{SU(2)\oplus SU(2)}$, it can be approximately synthesized by $3\log_2(1/\varepsilon)+o(\log(1/\varepsilon))$ w.h.p. T-count circuit $V_6\oplus V_7$ within an error $\varepsilon/4$ by Theorem \ref{thm:su2su2}.

    After these three steps, we can have
    \begin{equation}
        \left(V_3\otimes I_2 \right)
        \left(V_1\oplus V_2\right)
        \left(SH\otimes I_2 \right)
        \left(V_4\oplus V_5\right)
        \left(HS^\dagger H\otimes I_2 \right)
        \left(V_6\oplus V_7\right)
        \left(H\otimes I_2 \right)
    \end{equation}
    that approximates the target $U$ within an error $\varepsilon$ 
    (see details in Appendix \ref{sec:SU(4)synthesis}).
    
\end{proof}

\section*{Acknowledgments}
SY is supported by JSPS KAKENHI Grant No.~JP24KJ0907 and the Forefront Physics and Mathematics Program to Drive Transformation (FoPM).
SA is partially supported by JST PRESTO Grant No.JPMJPR2111, JST Moonshot R\&D MILLENNIA Program (Grant No.JPMJMS2061), JPMXS0120319794, and CREST (Japan Science and Technology Agency) Grant No.JPMJCR2113.

\bibliographystyle{eptcs}
\bibliography{ref}

@book{Nielsen_Chuang_2010, place={Cambridge}, title={Quantum Computation and Quantum Information: 10th Anniversary Edition}, publisher={Cambridge University Press}, author={Nielsen, Michael A. and Chuang, Isaac L.}, year={2010}}

@article{Amy2012AMA,
  title={A Meet-in-the-Middle Algorithm for Fast Synthesis of Depth-Optimal Quantum Circuits},
  author={Matthew Amy and Dmitrii L. Maslov and Michele Mosca and Martin R{\"o}tteler},
  journal={IEEE Transactions on Computer-Aided Design of Integrated Circuits and Systems},
  year={2012},
  volume={32},
  pages={818-830}
}

@article{Low2018TradingTG,
  title={Trading T gates for dirty qubits in state preparation and unitary synthesis},
  author={Guang Hao Low and Vadym Kliuchnikov and Luke Schaeffer},
  journal={Quantum},
  year={2018},
  volume={8},
  pages={1375}
}

@article{Morisaki:2025mwy,
    author = "Morisaki, Hayata and Sano, Kaoru and Akibue, Seiseki",
    title = "{Optimal ancilla-free Clifford+T synthesis for general single-qubit unitaries}",
    eprint = "2510.05816",
    archivePrefix = "arXiv",
    primaryClass = "quant-ph",
    month = "10",
    year = "2025"
}

@article{10.5555/2685188.2685198,
author = {Selinger, Peter},
title = {Efficient Clifford+T approximation of single-qubit operators},
year = {2015},
issue_date = {January 2015},
publisher = {Rinton Press, Incorporated},
address = {Paramus, NJ},
volume = {15},
number = {1–2},
issn = {1533-7146},
journal = {Quantum Info. Comput.},
month = jan,
pages = {159–180},
numpages = {22}
}

@article{10.5555/2685179.2685180,
author = {Gosset, David and Kliuchnikov, Vadym and Mosca, Michele and Russo, Vincent},
title = {An algorithm for the T-count},
year = {2014},
issue_date = {November 2014},
publisher = {Rinton Press, Incorporated},
address = {Paramus, NJ},
volume = {14},
number = {15–16},
issn = {1533-7146},
journal = {Quantum Info. Comput.},
month = nov,
pages = {1261–1276},
numpages = {16}
}

@book{Watrous_2018, place={Cambridge}, title={The Theory of Quantum Information}, publisher={Cambridge University Press}, author={Watrous, John}, year={2018}}

@inproceedings{Yamazaki2025CliffordVSF,
  title={Clifford+V synthesis for multi-qubit unitary gates},
  author={Soichiro Yamazaki and Seiseki Akibue},
    eprint = "2510.05816",
    archivePrefix = "arXiv",
    primaryClass = "quant-ph",
    month = "10",
    year = "2025"
}

@article{PARZANCHEVSKI2018869,
title = {Super-Golden-Gates for PU(2)},
journal = {Advances in Mathematics},
volume = {327},
pages = {869-901},
year = {2018},
note = {Special volume honoring David Kazhdan},
issn = {0001-8708},
doi = {https://doi.org/10.1016/j.aim.2017.06.022},
url = {https://www.sciencedirect.com/science/article/pii/S0001870817301640},
author = {Ori Parzanchevski and Peter Sarnak}
}

@article{PhysRevA.71.022316,
  title = {Universal quantum computation with ideal Clifford gates and noisy ancillas},
  author = {Bravyi, Sergey and Kitaev, Alexei},
  journal = {Phys. Rev. A},
  volume = {71},
  issue = {2},
  pages = {022316},
  numpages = {14},
  year = {2005},
  month = {Feb},
  publisher = {American Physical Society},
  doi = {10.1103/PhysRevA.71.022316},
  url = {https://link.aps.org/doi/10.1103/PhysRevA.71.022316}
}

@article{10.5555/3179330.3179331,
author = {Ross, Neil J. and Selinger, Peter},
title = {Optimal ancilla-free Clifford+T approximation of z-rotations},
year = {2016},
issue_date = {September 2016},
publisher = {Rinton Press, Incorporated},
address = {Paramus, NJ},
volume = {16},
number = {11–12},
issn = {1533-7146},
journal = {Quantum Info. Comput.},
month = sep,
pages = {901–953},
numpages = {53}
}

@article{PhysRevLett.118.010501,
  title = {Optimal Hamiltonian Simulation by Quantum Signal Processing},
  author = {Low, Guang Hao and Chuang, Isaac L.},
  journal = {Phys. Rev. Lett.},
  volume = {118},
  issue = {1},
  pages = {010501},
  numpages = {5},
  year = {2017},
  month = {Jan},
  publisher = {American Physical Society},
  doi = {10.1103/PhysRevLett.118.010501},
  url = {https://link.aps.org/doi/10.1103/PhysRevLett.118.010501}
}

@article{Low2019hamiltonian,
  doi = {10.22331/q-2019-07-12-163},
  url = {https://doi.org/10.22331/q-2019-07-12-163},
  title = {Hamiltonian {S}imulation by {Q}ubitization},
  author = {Low, Guang Hao and Chuang, Isaac L.},
  journal = {{Quantum}},
  issn = {2521-327X},
  publisher = {{Verein zur F{\"{o}}rderung des Open Access Publizierens in den Quantenwissenschaften}},
  volume = {3},
  pages = {163},
  month = jul,
  year = {2019}
}

@article{Kliuchnikov2023shorterquantum,
  doi = {10.22331/q-2023-12-18-1208},
  url = {https://doi.org/10.22331/q-2023-12-18-1208},
  title = {Shorter quantum circuits via single-qubit gate approximation},
  author = {Kliuchnikov, Vadym and Lauter, Kristin and Minko, Romy and Paetznick, Adam and Petit, Christophe},
  journal = {{Quantum}},
  issn = {2521-327X},
  publisher = {{Verein zur F{\"{o}}rderung des Open Access Publizierens in den Quantenwissenschaften}},
  volume = {7},
  pages = {1208},
  month = dec,
  year = {2023}
}

@inbook{doi:10.1137/1.9781611978971.122,
author = {David Gosset and Robin Kothari and Kewen Wu},
title = {Quantum State Preparation with Optimal T-Count},
booktitle = {Proceedings of the 2026 Annual ACM-SIAM Symposium on Discrete Algorithms (SODA)},
chapter = {},
pages = {3378-3406},
doi = {10.1137/1.9781611978971.122},
URL = {https://epubs.siam.org/doi/abs/10.1137/1.9781611978971.122},
eprint = {https://epubs.siam.org/doi/pdf/10.1137/1.9781611978971.122}
}

\appendix
\section{Even-number T-count SU(2) synthesis} \label{sec:even_t}
Morisaki et al. showed that every $U\in\mathrm{SU(2)}$ can be synthesized by
\begin{equation} \label{eq:single_synthesis_morisaki}
    3\log_2(1/\varepsilon)+o(\log(1/\varepsilon))
    \quad \mathrm{w.h.p.}
\end{equation}
$T$ gates \cite{Morisaki:2025mwy}.
In this section, we prove that a Clifford+T circuit can be constructed that approximates a single-qubit gate, consists of an even number of $T$ gates, and has the same asymptotic T-count scale as Equation (\ref{eq:single_synthesis_morisaki}).
\begin{lmm} [even-number T-count SU(2) gate synthesis]
    \label{lmm:even_tcount_su2}
    Single-qubit gate SU(2) can be implemented up to error $\varepsilon$ by an ancilla-free Clifford+T circuit using
    \begin{equation}
        3\log_2(1/\varepsilon)+o(\log(1/\varepsilon))
        \quad \mathrm{w.h.p.}
    \end{equation}
    $T$ gates even if the T-count is limited to even numbers.
\end{lmm}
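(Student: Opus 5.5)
We reduce the statement to the result of Morisaki et al.\ (Equation (\ref{eq:single_synthesis_morisaki})). The reduction has two parts: a parity adjustment that costs $O(1)$ extra $T$ gates, and a bookkeeping step for the probability tails.

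\textbf{Step 1 (probabilistic input).} Take a $T$-count-optimal circuit in Matsumoto--Amano normal form (Equation (\ref{eq:manormal})) with $T$-count $m=3\log_2(1/\varepsilon)+o(\log(1/\varepsilon))$ w.h.p. I will also use the quantitative tails behind this result. The upper tail, $\Pr[m>(3+\gamma)\log_2(1/\varepsilon)]=O(\varepsilon^\gamma\mathrm{polylog}(1/\varepsilon))$, comes from the analysis of Morisaki et al. The lower tail, $\Pr[m\le(3-\gamma)\log_2(1/\varepsilon)]=O(\varepsilon^\gamma)$, comes from a counting and volume argument. There are at most $O(2^{m'})$ distinct normal forms with $T$-count at most $m'$ (up to the finite Clifford factors). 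Each one covers a diamond-distance ball of Haar measure $O(\varepsilon^3)$ in $\mathrm{SU}(2)$. So the Haar measure of targets reachable with $T$-count at most $(3-\gamma)\log_2(1/\varepsilon)$ is $O(2^{(3-\gamma)\log_2(1/\varepsilon)}\varepsilon^3)=O(\varepsilon^\gamma)$. These are exactly the estimates (\ref{eq:count_upper}) and (\ref{eq:count_lower}) that Lemma \ref{lmm:bound} relies on. They must hold for even $T$-counts, and Steps 2--3 ensure this.

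\textbf{Step 2 (parity fix).} If $m$ is odd, I convert the circuit into an even-$T$-count circuit with the same diamond error and at most $O(1)$ extra $T$ gates. The simplest choice is to append a $T$ gate together with a compensating Clifford-equivalent factor. We cannot use $T\cdot T^\dagger$ directly, since that has $T$-count $2$ and preserves parity. Instead I use an identity from the paper: $(TH)^\dagger = HTSH\cdot HZ$. Replacing the last block $M_m=TH$ (or $TSH$) by an equivalent expression of different $T$-parity is impossible exactly, because the $T$-count parity of an exact Clifford+T unitary is an invariant. The reason is that $\det$ gives a factor $\zeta^{\#T}$. This is why the lemma is stated in $\mathrm{SU}(2)$, i.e.\ up to global phase.

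Working projectively, I compare the odd circuit $W$ with $\zeta^{-1/2}W$, which lies in $\mathrm{SU}(2)$ up to sign. Then I choose the odd-versus-even issue to be absorbed by the global phase $\zeta^{s}$ allowed in Equation (\ref{eq:my_form}). Concretely, odd-$T$-count circuits have determinant an odd power of $\zeta$. As elements of $\mathrm{PU}(2)$, the odd- and even-$T$-count circuits form two cosets of an index-two subgroup. Both cosets are dense, but in $\mathrm{SU}(2)$ with the Frobenius-closest phase they are genuinely different sets.

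\textbf{Step 3 (main obstacle).} This is the real step: showing that restricting to the even coset costs only $o(\log(1/\varepsilon))$ w.h.p. I would first try to rerun the Morisaki et al.\ number-theoretic construction restricted to even $T$-count. Their method solves norm equations in $\mathbb{Z}[\zeta]$ with denominator exponent $k$, and the $T$-count is tied to $\mathrm{sde}$ and hence to $k$. Restricting $k$ to an appropriate parity class only shrinks the candidate set by a constant factor, because the candidates are drawn from a lattice-point region whose density is uniform in $k$. So the success probability per exponent changes by at most a constant, and the tail bound (\ref{eq:count_upper}) survives with $\gamma$ unchanged.

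As a fallback, I would take the odd-count solution for error $\varepsilon/2$ and multiply it by a fixed odd-$T$-count Clifford+T unitary $R$. I would then synthesize the residual target $U R^\dagger$ instead, which turns parity into a target shift. Since the Haar measure is invariant under right translation by $R$, the w.h.p.\ statement transfers directly. The overhead is $T$-count$(R)+O(1)$ and $\log_2 2$, which is absorbed in $o(\log(1/\varepsilon))$.
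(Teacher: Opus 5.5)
Your lower tail is fine, and it transfers to the even-restricted count for free, since forbidding odd T-counts can only raise the minimum. The gap is the upper tail (\ref{eq:count_upper}) for the even-restricted count. That is the whole content of the lemma, and none of your attempts proves it.

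Step 2 correctly notes that T-count parity is a channel invariant ($\det$ is $\zeta^{\#T}$ times an even power of $\zeta$). That is exactly why no $O(1)$ parity fix exists, and no choice of the phase $\zeta^{s}$ can change parity either. But the step ends without producing a fix.

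The fallback in Step 3 is circular. $WR$ has even T-count only if the circuit $W$ found for $UR^\dagger$ has odd T-count. Right-invariance of Haar measure makes $UR^\dagger$ Haar-random, which controls the \emph{size} of that T-count but not its \emph{parity}. You have traded even-parity synthesis of $U$ for odd-parity synthesis of $UR^\dagger$, which is the same problem. Nothing prevents the event ``$U$'s synthesis is odd and $UR^\dagger$'s is even'' from having constant probability.

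Your primary route, rerunning the number-theoretic analysis on the even class, is the direction the paper takes. However, ``the candidate set only shrinks by a constant factor'' is not an argument. The upper tail is a covering statement for roughly $2^m$ points at scale $\varepsilon$. It comes from the Parzanchevski--Sarnak equidistribution bound, which needs the arithmetic structure of the point set, not just its cardinality. Moreover, parity is not governed by the parity of a norm-equation denominator exponent.

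The missing idea is the identification the paper uses. After normalizing to determinant $1$, the even-T-count unitaries are exactly the matrices $\begin{pmatrix}u_1&-u_2^\dagger\\u_2&u_1^\dagger\end{pmatrix}$ with $u_1,u_2\in\mathbb{Z}[1/\sqrt2,i]$. An odd T-count forces $\det=\zeta^{\mathrm{odd}}$, and the square-root phase needed to normalize it lies outside $\mathbb{Q}(\zeta)$. In particular, every determinant-one unitary built from a solution of $|u|^2+|v|^2=2^k$ with $u,v\in\mathbb{Z}[\zeta]$ is already even, whatever the parity of $k$.

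Hence each point of $S(2^t)=\{(a,b,c,d)\in\mathbb{Z}[\sqrt2]^4 : a^2+b^2+c^2+d^2=2^t\}$, rescaled by $\sqrt2^{-t}$, is an even-T-count unitary of T-count at most $2t$. This is precisely the arithmetic sphere to which the Parzanchevski--Sarnak/Morisaki covering bound applies. Combined with $|S(2^t)|\ge N((2^t))$ and the divisor count, it gives $\Pr[\mathcal{T}_\varepsilon(U)>2t]=O(t^2/(2^{2t}\varepsilon^3))$. Taking $2t\approx(3+\gamma)\log_2(1/\varepsilon)$ then yields (\ref{eq:count_upper}).
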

\begin{proof} [Proof of Theorem \ref{lmm:even_tcount_su2}]
    U(2) gates that can be exactly constructed with an even number of $T$ gates can be reduced to the following form by adjusting their phase:
    \begin{equation}
        U=
        \begin{pmatrix}
            u_1&-u_2^\dagger \\
            u_2&u_1^\dagger
        \end{pmatrix}
        \qquad\left(u_1,u_2\in\mathbb{Z}\left[\frac{1}{\sqrt{2}},i\right]\right).
    \end{equation}
    For some non-negative integers $k$, a subset of $(\mathbb{Z}[1/\sqrt{2}])^4$,
    consisting of the real and imaginary parts of $u_1, u_2$ of $U$ such that $k \geq sde(U)$,
    becomes a subset of $(\mathbb{Z}[\sqrt{2}])^4$ by multiplying all its elements by $\sqrt{2}^k$.
    
    Let $S(m) = \{(a,~b,~c,~d)\in(\mathbb{Z}[\sqrt{2}])^4|~ a^2+b^2+c^2+d^2=m\}$ and $\mathcal{T}_\varepsilon(U)$ be the minimum even number T-count for the $\varepsilon$-approximation of $U$.
    Then, the following equation holds \cite{Morisaki:2025mwy}:
    \begin{equation}
        |S(m)| \geq N((m)),
    \end{equation}
    where $N(\cdot)$ denotes the norm map.
    From \cite{PARZANCHEVSKI2018869} and \cite{Morisaki:2025mwy}, for any $\gamma > 0$ and $t\in\{0,1,2,\cdots\}$,
    \begin{equation}
        \Pr[\mathcal{T}_\varepsilon(U) > 2t] =O\left( \frac{N((2^t))\left(\sum_{(k)|(2^t)}1\right)^2}{S(2^t)^2B(\varepsilon)}\right) = O\left(\frac{t^2}{2^{2t}\varepsilon^3}\right).
    \end{equation}
    Thus,
    \begin{equation}
        \Pr[\mathcal{T}_\varepsilon(U) > (3+\gamma) \log_2(1/\varepsilon)] = O(\varepsilon^\gamma\mathrm{polylog}(1/\varepsilon)).
    \end{equation}
    By \cite{Morisaki:2025mwy},
    \begin{equation}
        \Pr[\mathcal{T}_\varepsilon(U) \leq (3-\gamma) \log_2(1/\varepsilon)] \leq O(\varepsilon^\gamma).
    \end{equation}
    
\end{proof}

\section{Correspondence between SU(4) and SO(6)} \label{sec:su4so6}
There exists an isomorphism between $\mathrm{SU(4)}/\mathbb{Z}_2$ and $\mathrm{SO(6)}$ whose correspondence between Paulis and rotations in planes parallel to two axes is as follows:
\begin{eqnarray}
    \notag
    I_2\otimes X\leftrightarrow (2,3),~
    I_2\otimes Y\leftrightarrow (1,3),~
    I_2\otimes Z\leftrightarrow (1,2), \\
    \notag
    X\otimes I_2\leftrightarrow (5,6),~
    Y\otimes I_2\leftrightarrow (4,6),~
    Z\otimes I_2\leftrightarrow (4,5), \\
    X\otimes X\leftrightarrow (1,4),~
    Y\otimes X\leftrightarrow (1,5),~
    Z\otimes X\leftrightarrow (1,6), \\
    \notag
    X\otimes Y\leftrightarrow (2,4),~
    Y\otimes Y\leftrightarrow (2,5),~
    Z\otimes Y\leftrightarrow (2,6), \\
    \notag
    X\otimes Z\leftrightarrow (3,4),~
    Y\otimes Z\leftrightarrow (3,5),~
    Z\otimes Z\leftrightarrow (3,6).
\end{eqnarray}

\section{Diamond distance between target SU(4) and its approximation} \label{sec:SU(4)synthesis}
\begin{lmm} [Diamond distance between target SU(4) and its approximation] \label{lmm:su4kinji}
    Let $V_1,V_2,V_3,V_4,V_5,V_6,V_7\in\mathrm{SU(2)}$,  $U,U_1,U_2\in\mathrm{SU(4)}$ and $\varepsilon>0$ that satisfies the following conditions:
    \begin{equation}
        U = e^{i\theta_1 X\otimes I_2}e^{i\theta_2 Y\otimes I_2}e^{i\theta_3 Z\otimes X}e^{i\theta_4 Z\otimes Y}e^{i\theta_5 Z\otimes Z} U_1
        \qquad (\text{(SO(6) representation of $U_1$) $\in\mathrm{SO(5)}\oplus\{I_1\}$}),
    \end{equation}
    \begin{equation}
        \left(I_2\otimes\left(V_1^\dagger e^{i\theta_3 X}e^{i\theta_4 Y}e^{i\theta_5 Z}\right)\right) U_1 =
        e^{i\theta_6 Z\otimes I_2}e^{i\theta_7 Y\otimes X}e^{i\theta_8 Y\otimes Y}e^{i\theta_9 Y\otimes Z} U_2
        \qquad (\text{(SO(6) representation of $U_2$) $\in\mathrm{SO(4)}\oplus\{I_2\}$}),
    \end{equation}
    \begin{equation}
        D_\diamond(V_1V_2^\dagger,~e^{i\theta_3 X}e^{i\theta_4 Y}e^{2i\theta_5 Z}e^{i\theta_4 Y}e^{i\theta_3 X}) < \varepsilon/4,
    \end{equation}
    \begin{equation}
        D_\diamond(V_3,~e^{i\theta_1 X}e^{i\theta_2 Y}e^{i\theta_6 Z}) < \varepsilon/4,
    \end{equation}
    \begin{equation}
        D_\diamond(V_4V_5^\dagger,~e^{i\theta_7 X}e^{i\theta_8 Y}e^{2i\theta_9 Z}e^{i\theta_8 Y}e^{i\theta_7 X}) < \varepsilon/4,
    \end{equation}
    \begin{equation}
        D_\diamond\left(V_6\oplus V_7,~ \left(H\otimes \left(V_4^\dagger e^{i\theta_7 X}e^{i\theta_8 Y}e^{i\theta_9 Z}\right)\right)U_2\left(H\otimes I_2\right)\right) < \varepsilon/4.
    \end{equation}
    Then, the following equation holds:
    \begin{equation}
        D_\diamond\left(U,~ U^\prime\right)<\varepsilon,
    \end{equation}
    where
    \begin{equation}
        U^\prime = 
        \left(V_3\otimes I_2 \right)
        \left(V_1\oplus V_2\right)
        \left(SH\otimes I_2 \right)
        \left(V_4\oplus V_5\right)
        \left(HS^\dagger H\otimes I_2 \right)
        \left(V_6\oplus V_7\right)
        \left(H\otimes I_2 \right).
    \end{equation}
\end{lmm}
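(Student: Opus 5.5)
The plan is to write $U^\prime$ as a product of four "ideal" pieces, each perturbed by one of the four approximation errors, and then use subadditivity of the diamond distance under composition. Concretely, I would show the exact identity
\begin{equation*}
U = \left(W_3\otimes I_2\right)\left(W_1\oplus W_2\right)\left(SH\otimes I_2\right)\left(W_4\oplus W_5\right)\left(HS^\dagger H\otimes I_2\right) R \left(H\otimes I_2\right),
\end{equation*}
where $W_3=e^{i\theta_1 X}e^{i\theta_2 Y}e^{i\theta_6 Z}$, $W_1=V_1$, $W_2$ is defined by $W_1W_2^\dagger = e^{i\theta_3 X}e^{i\theta_4 Y}e^{2i\theta_5 Z}e^{i\theta_4 Y}e^{i\theta_3 X}$, $W_4=V_4$, $W_5$ is defined similarly from $V_4$ and the $\theta_7,\theta_8,\theta_9$ product, and $R=(H\otimes(V_4^\dagger e^{i\theta_7X}e^{i\theta_8Y}e^{i\theta_9Z}))U_2(H\otimes I_2)$. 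Given this identity, each factor of $U^\prime$ is within $\varepsilon/4$ of the corresponding ideal factor. For the controlled blocks, $D_\diamond(V_1\oplus V_2, W_1\oplus W_2)=D_\diamond(V_2,W_2)=D_\diamond(V_1V_2^\dagger, W_1W_2^\dagger)$, since $W_1=V_1$ and $D_\diamond$ is unitarily invariant. Unitary invariance together with the triangle inequality then gives $D_\diamond(U,U^\prime)<4\cdot\varepsilon/4=\varepsilon$. One caveat: $V_2$ versus $W_2$ could differ by a sign. A $\pm$ sign on one block of a direct sum is not a global phase, so I would check that the even-T-count normalisation of Lemma~\ref{lmm:su2su2su2} fixes this. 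Failing that, I would absorb the sign into the definition of $W_2$ and verify that the identity still holds up to a global phase.

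The main step is the exact identity, which I would verify block by block using $e^{i\theta Z\otimes P}=e^{i\theta P}\oplus e^{-i\theta P}$ (control on the first qubit). First, the product $e^{i\theta_3 Z\otimes X}e^{i\theta_4 Z\otimes Y}e^{i\theta_5 Z\otimes Z}$ equals $A\oplus A^{\prime}$, with $A=e^{i\theta_3X}e^{i\theta_4Y}e^{i\theta_5Z}$ and $A^\prime=e^{-i\theta_3X}e^{-i\theta_4Y}e^{-i\theta_5Z}$. I would rewrite $A\oplus A^\prime=(I_2\otimes A)\,(I_2\oplus A^{-1}A^\prime)$ and then use $V_1\oplus V_2=(I_2\otimes V_1)(I_2\oplus V_1^\dagger V_2)$. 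The choice of $V_1V_2^\dagger\approx AA^{\prime\,-1}$ (note $A^{\prime -1}=e^{i\theta_5Z}e^{i\theta_4Y}e^{i\theta_3X}$, matching the given palindromic product) is exactly what makes the residual $(I_2\otimes V_1^\dagger A)\,U_1$ appear, which is the left-hand side of the second decomposition. The $X\otimes I_2$ and $Y\otimes I_2$ factors commute past $Z\otimes P$ only up to conjugation, so I would instead collect them, together with $e^{i\theta_6 Z\otimes I_2}$, into $W_3\otimes I_2$ on the far left. To justify this I would use the SO(6) picture from Appendix~\ref{sec:su4so6}: the rotations $(5,6),(4,6),(4,5)$ act on axes $4,5,6$, while $(1,6),(2,6),(3,6)$ fix axis 4 and 5. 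The ordering in the paper's decomposition is presumably designed so that these rearrangements are legitimate.

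Second, I would treat $e^{i\theta_7 Y\otimes X}e^{i\theta_8 Y\otimes Y}e^{i\theta_9 Y\otimes Z}$ the same way, after conjugating $Y\otimes I_2$ to $Z\otimes I_2$ by the first-qubit Clifford $SH$ (up to orientation). This explains the $SH\otimes I_2$ and $HS^\dagger H\otimes I_2$ factors. I would check that $(SH)^\dagger Y (SH)=\pm Z$ and that the combined Clifford between blocks is consistent. Third, the remaining $U_2$, whose SO(6) image lies in $\mathrm{SO}(4)\oplus\{I_2\}$, fixes axes 5 and 6. It therefore commutes with $X\otimes I_2$ and with the rotation generated by $(5,6)$, so after conjugation by $H\otimes I_2$ it is block diagonal. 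This is how $R\in\mathrm{SU}(2)\oplus\mathrm{SU}(2)$ arises.

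I expect the main obstacle to be this Clifford and sign bookkeeping rather than the error estimate. Specifically, I need to confirm that the first-qubit rotations really factor to the front as $W_3\otimes I_2$, given the order in which $e^{i\theta_6 Z\otimes I_2}$ appears in the second decomposition, and that all signs and global phases match. I would attack it by computing everything in the SO(6) representation, where each step is an explicit Givens rotation, and lift back to SU(4) at the end. This lift is valid because the diamond distance is insensitive to the $\mathbb{Z}_2$ kernel $\pm I_4$.
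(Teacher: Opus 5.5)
Your route is the paper's. Its chain $U'\to U_1'\to U_2'\to U_3'\to U_4'=U$ swaps in, one factor at a time, exactly your ideal factors $R$, $W_4\oplus W_5$, $W_1\oplus W_2=(A\oplus A')(I_2\otimes A^{-1}V_1)$ and $W_3$, and your exact identity for $U$ is correct. The rearrangement you flagged as the main obstacle is immediate and needs no SO(6) argument. The factors $e^{i\theta_1X\otimes I_2}e^{i\theta_2Y\otimes I_2}$ are already leftmost in the decomposition of $U$. The factor $e^{i\theta_6Z\otimes I_2}$ commutes with every block-diagonal matrix, hence with $(A\oplus A')(I_2\otimes A^{-1}V_1)$, so it slides left to join $W_3$. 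Moreover $(SH)Z(SH)^\dagger=Y$ exactly, and the Hadamards flanking $R$ cancel against those in $HS^\dagger H\otimes I_2$ and in the final $H\otimes I_2$.

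The sign caveat, however, is a genuine gap. The paper's proof shares it: it asserts $D_\diamond(V_4\oplus V_5,\cdot)=D_\diamond(V_5V_4^\dagger,\cdot)$ without comment. The gap cannot be closed from the hypotheses as stated. For $M,N\in\mathrm{SU}(2)$ with $MN^\dagger$ having eigenvalues $e^{\pm i\phi}$, the numerical range of $I_2\oplus MN^\dagger$ is the convex hull of $1,e^{i\phi},e^{-i\phi}$. Hence $D_\diamond(I_2\oplus M,I_2\oplus N)=|\sin\phi|=D_\diamond(M,N)$ when $\cos\phi\geq 0$, but it equals $1$ when $\cos\phi<0$.

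Concretely, keep every other factor ideal and set $V_2=-W_2$. Then $V_1V_2^\dagger=-G$, where $G=e^{i\theta_3X}e^{i\theta_4Y}e^{2i\theta_5Z}e^{i\theta_4Y}e^{i\theta_3X}$, so the hypothesis holds with distance $0$. Yet $V_1\oplus V_2=(W_1\oplus W_2)(Z\otimes I_2)$, so $U'U^\dagger$ is conjugate to $Z\otimes I_2$ and $D_\diamond(U,U')=1$. The same computation rules out your fallback: flipping the sign of $W_2$ inserts $Z\otimes I_2$, which is not a global phase, so the identity breaks.

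The repair is to strengthen the lemma with a sign-sensitive hypothesis, e.g.\ $\mathrm{tr}(G^\dagger V_1V_2^\dagger)\geq 0$, together with the analogous condition on $V_4V_5^\dagger$. This costs nothing in the application. If the condition fails, replace $V_2$ by $-V_2$, i.e.\ append the Clifford $Z\otimes I_2$; this is the role played by the choice of $s_i$ in Section~\ref{ssec:construction}. With that condition, your equality $D_\diamond(V_1\oplus V_2,W_1\oplus W_2)=D_\diamond(V_1V_2^\dagger,G)$ holds, and the four-term triangle inequality finishes the proof.
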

\begin{proof} [Proof of Theorem \ref{lmm:su4kinji}]
    
    $U^\prime$ and the following $U_1^\prime$ are located at a distance of $\epsilon/4$ in the diamond distance:
    \begin{equation}
        U_1^\prime =
        \left(V_3\otimes I_2 \right)
        \left(V_1\oplus V_2\right)
        \left(SH\otimes I_2 \right)
        \left(V_4\oplus V_5\right)
        \left(HS^\dagger \otimes I_2 \right)
        \left(I_2\otimes \left(V_4^\dagger e^{i\theta_7 X}e^{i\theta_8 Y}e^{i\theta_9 Z}\right)\right)
        U_2.
    \end{equation}
    $U^\prime_1$ and the following $U_2^\prime$ are located at a distance of $\epsilon/4$ in the diamond distance:
    \begin{eqnarray}
        U_2^\prime &=& \notag
        \left(V_3\otimes I_2 \right)
        \left(V_1\oplus V_2\right)
        \left(e^{i\theta_7 Y\otimes X}e^{i\theta_8 Y\otimes Y}e^{i\theta_9 Y\otimes Z} \right)
        \left(I_2\otimes \left(e^{-i\theta_9 Z}e^{-i\theta_8 Y}e^{-i\theta_7 X} V_4\right)\right)
        \left(I_2\otimes \left(V_4^\dagger e^{i\theta_7 X}e^{i\theta_8 Y}e^{i\theta_9 Z}\right)\right)
        U_2 \\
        &=&
        \left(V_3\otimes I_2 \right)
        \left(V_1\oplus V_2\right)
        \left(e^{i\theta_7 Y\otimes X}e^{i\theta_8 Y\otimes Y}e^{i\theta_9 Y\otimes Z} \right)
        U_2
    \end{eqnarray}
    since
    \begin{eqnarray} \notag
        D_\diamond\left(V_4\oplus V_5,~ e^{i\theta_7 Z\otimes X}e^{i\theta_8 Z\otimes Y}e^{i\theta_9 Z\otimes Z}\left(I_2\otimes {\red e^{-i\theta_9  Z}e^{-i\theta_8  Y}e^{-i\theta_7  X}}V_4\right)\right) \\
        = D_\diamond(V_5V_4^\dagger,~e^{-i\theta_7 X}e^{-i\theta_8 Y}e^{-2i\theta_9 Z}e^{-i\theta_8 Y}e^{-i\theta_7 X}) < \varepsilon/4.
    \end{eqnarray}
    $U^\prime_2$ and the following $U_3^\prime$ are located at a distance of $\epsilon/4$ in the diamond distance:
    \begin{eqnarray} \notag
        U_3^\prime &=&
        \left(V_3\otimes I_2 \right)
        \left(e^{i\theta_3 Z\otimes X}e^{i\theta_4 Z\otimes Y}e^{i\theta_5 Z\otimes Z} \right)
        \left(I_2\otimes \left(e^{-i\theta_5 Z}e^{-i\theta_4 Y}e^{-i\theta_3 X} V_1\right)\right)
        \left(e^{i\theta_7 Y\otimes X}e^{i\theta_8 Y\otimes Y}e^{i\theta_9 Y\otimes Z} \right)
        U_2.\\
    \end{eqnarray}
    $U^\prime_3$ and the following $U_4^\prime$ are located at a distance of $\epsilon/4$ in the diamond distance:
    \begin{eqnarray}
        U_4^\prime &=& \notag
        \left(e^{i\theta_1 X\otimes I_2}e^{i\theta_2 Y\otimes I_2}e^{i\theta_6 Z\otimes I_2} \right)
        \left(e^{i\theta_3 Z\otimes X}e^{i\theta_4 Z\otimes Y}e^{i\theta_5 Z\otimes Z} \right)
        \left(I_2\otimes \left(e^{-i\theta_5 Z}e^{-i\theta_4 Y}e^{-i\theta_3 X} V_1\right)\right)
        \left(e^{i\theta_7 Y\otimes X}e^{i\theta_8 Y\otimes Y}e^{i\theta_9 Y\otimes Z} \right)
        U_2 \\
        &=& \notag
        \left(e^{i\theta_1 X\otimes I_2}e^{i\theta_2 Y\otimes I_2}e^{i\theta_3 Z\otimes X}e^{i\theta_4 Z\otimes Y}e^{i\theta_5 Z\otimes Z} \right)
        \left(I_2\otimes \left(e^{-i\theta_5 Z}e^{-i\theta_4 Y}e^{-i\theta_3 X} V_1\right)\right)
        \left(e^{i\theta_6 Z\otimes I_2}e^{i\theta_7 Y\otimes X}e^{i\theta_8 Y\otimes Y}e^{i\theta_9 Y\otimes Z} \right)
        U_2 \\
        &=& \notag
        \left(e^{i\theta_1 X\otimes I_2}e^{i\theta_2 Y\otimes I_2}e^{i\theta_3 Z\otimes X}e^{i\theta_4 Z\otimes Y}e^{i\theta_5 Z\otimes Z} \right)
        U_1 \\
        &=& U.
    \end{eqnarray}
    By the triangle inequality, $D_\diamond(U,~U^\prime)<\varepsilon$ holds.
    
\end{proof}

\end{document}